\newcommand{\Rom}[1]{\expandafter\@slowromancap\romannumeral #1@}
 \gdef\xxxmark{%
   \expandafter\ifx\csname @mpargs\endcsname\relax 
     \expandafter\ifx\csname @captype\endcsname\relax 
       \marginpar{xxx}
     \else
       xxx 
     \fi
   \else
     xxx 
   \fi}
 \gdef\xxx{\@ifnextchar[\xxx@lab\xxx@nolab}
 \long\gdef\xxx@lab[#1]#2{{\bf [\xxxmark #2 ---{\sc #1}]}}
 \long\gdef\xxx@nolab#1{{\bf [\xxxmark #1]}}
 \gdef\turnoffxxx{\long\gdef\xxx@lab[##1]##2{}\long\gdef\xxx@nolab##1{}}%
\newcommand{\cswitch}[1]{}
\newcommand{\comon}[1]{#1}
\newcommand{\comoff}[1]{}
\newcommand{\ozc}[1]{\cswitch{\comon{\textcolor{red}{\xxx[oz]{#1}}}}}
\newcounter{single}
\newtheorem{thm}[single]{Theorem}
\newtheorem{lem}[single]{Lemma}
\newcommand{\ens}[1]{\ensuremath{#1}}
\newcommand{\bigoh}[1]{\ens{\mathcal{O}(#1)}}
\newcommand{\bigom}[1]{\ens{\Omega(#1)}}
\newcommand{\thet}[1]{\ens{\Theta(#1)}}
\newcommand{\varbigoh}[1]{\ens{\mathcal{O}\left(#1\right)}}
\definecolor{nts}{rgb}{.8,.1,.8}
\definecolor{chk}{rgb}{.8,.2,.1}
\definecolor{reword}{rgb}{.9,.5,.7}
\newcommand{\nts}[1]{\textcolor{nts}{\ens{\rightarrow} #1 \ens{\leftarrow}}}
\renewcommand{\nts}[1]{ }
\newcommand{\chk}[1]{\textcolor{chk}{\ens{>~!} #1 \ens{!~<}}}
\renewcommand{\chk}[1]{ }
\newcommand{\imp}[1]{\textcolor{Red}{#1}}
\renewcommand{\imp}[1]{ }
\newcommand{\cons}[1]{\ens{C_{\mathsf{#1}}}}
\newcommand{\defcons}[2]{
	\ifbool{numsubs}{
		\edef#1{\noexpand\cons{\arabic{myc}}}
		\addtocounter{myc}{1}
	}
	{
		\newcommand{#1}{\cons{#2}}
	}
}
\newcommand{\inout}[1]{\ens{#1}} 
\renewcommand{\inout}[1]{\[#1\]}
\newcommand{\parg}[1]{\ens{(#1)}}
\newcommand{\pargg}[2]{\ens{(#1,#2)}}
\newcommand{\cods}{cache-oblivious data structure}
\newcommand{\miniexp}{\ens{\frac{1-\epsilon}{\log 3}}}
\newcommand{\miniexpin}{\ens{(1-\epsilon)/\log 3}}
\newcommand{\vers}{\ens{V}} 
\newcommand{\asize}{\ens{U}}
\newcommand{\gridwidth}{\ens{\asize}}
\newcommand{\mem}{\ens{A}}
\newcommand{\memel}[1]{\ens{\mem[#1]}}
\newcommand{\opstyle}[1]{\ens{\mathsf{#1}}}
\newcommand{\kwread}{\opstyle{Read}}
\newcommand{\readop}[1]{\kwread\ens{(#1)}}
\newcommand{\kwwrite}{\opstyle{Write}}
\newcommand{\writeop}[2]{\kwwrite\ens{(#1,#2)}}
\newcommand{\kwpread}{\opstyle{Persistent\text{-}Read}}
\newcommand{\preadop}[2]{\kwpread\ens{(#1,#2)}}
\newcommand{\kwquerytime}{\ens{t_{q}}}
\newcommand{\querytimeb}[1]{\kwquerytime\parg{#1}}
\newcommand{\querytimemb}[2]{\kwquerytime\pargg{#1}{#2}}
\newcommand{\querytimebthet}[1]{\querytimeb{\thet{#1}}}
\newcommand{\querytimembthet}[2]{\querytimemb{\thet{#1}}{\thet{#2}}}
\begin{document}

\author[1]{Pooya Davoodi\thanks{Research supported by NSF grant CCF-1018370 and BSF grant 2010437.}
\hspace{-.75mm}}
\author[2]{Jeremy T. Fineman\thanks{Research supported by NSF grants CCF-1314633 and CCF-1218188.}
\hspace{-.75mm}}
\author[1]{John Iacono\thanks{Research supported by NSF grant CCF-1018370.}
\hspace{-.75mm}}
\author[1]{\"Ozg\"ur \"Ozkan
\hspace{-.75mm}}
\affil[1]{New York University}
\affil[2]{Georgetown University}

\date{}

\title{Cache-Oblivious Persistence}

\maketitle


\begin{abstract}
Partial persistence is a general transformation that takes a data structure and allows queries to
be executed on any past state of the structure. 
The cache-oblivious model is the leading model of a modern multi-level memory hierarchy.
We present the first general transformation for making cache-oblivious model data structures partially persistent.
\end{abstract}



\section{Introduction}

\label{sec:intro}

Our result is a general transformation to make a data structure partially persistent in the cache-oblivious model.
We first review both the persistence paradigm and the cache-oblivious model before presenting our result.

\paragraph{Persistence.}
\emph{Persistence} is a fundamental data structuring paradigm whereby operations are allowed not only on the current state of the data structure but also on past states. Being able to efficiently work with the past has become a basic feature of many real world systems, including consumer-oriented software such as Apple's Time Machine\ozc{cite}, as well as being a fundamental tool for algorithm development\ozc{cite}. There are several types of persistence which vary in power as to how new versions can be created from existing ones. 

One can also view persistence as a transformation that extends the operations of an underlying data structure ADT. Data structures, generally speaking, have two types of operations, queries and updates. A query is an operation that does not change the structure, while updates do\footnote{Note that, for example, in self-adjusting structures (e.g.~\cite{DBLP:journals/jacm/SleatorT85,DBLP:journals/jacm/Tarjan75,DBLP:journals/algorithmica/FredmanSST86}) operations considered to be queries are not queries by this definition since rotations are performed to move the searched item to the root, thus changing the structure.}. A \emph{version} is a snapshot of the data structure at the completion of an update, and updates generate new versions. In all variants of persistence queries are allowed in any version, past or present. The simplest form of persistence is \emph{partial persistence} whereby updates can only be performed on the most recently generated version, called the \emph{present}. The ensuing collection of versions models a linear view of time whereby one can execute operations only in the present and have a read-only view back into the past. A more complicated form of persistence is known as \emph{full} persistence and allows new versions to be generated by performing an update operation on any of the previous versions; the new version is branched off from the old one and the relationship among the versions form a tree. The most complicated form of persistence is when the underlying data structure allows the combination of two data structures into one. Such operations are called \emph{meld}-type operations, and if meld-type operations are allowed to be performed on any two versions of the data structure, the result is said to be \emph{confluently persistent}. In confluent persistence, the versions form a DAG. In \cite{DBLP:journals/talg/DemaineIL07}, the retroactive model was introduced, which differed from persistence by allowing insertion and deletion of operations in the past, with the results propagated to the present; this is substantially harder than forking off new versions.

Much of the work so far on persistence has been done in the pointer model. 
The landmark result was when in \cite{DBLP:journals/cacm/SarnakT86,DBLP:journals/jcss/DriscollSST89} it was shown that any pointer based structure with constant in-degree can be made fully persistent with no additional asymptotic cost in time and space; the space used was simply the total number of changes to the structure. The first attempt at supporting confluent operations was \cite{DBLP:journals/jal/FiatK03}, but the query runtimes could degrade significantly as a result of their transformation. This was largely because by merging a structure with itself, one can obtain an implicit representation of an exponential sized structure in linear steps. In \cite{DBLP:conf/soda/ColletteIL12} the situation was improved by showing that if a piece of data can only occur once in each structure, then the process of making a structure confluently persistent becomes reasonable.

\paragraph{The cache-oblivious model.}
The \emph{Disk-Access Model
  (DAM)} 
is the classic model of a two-level memory hierarchy; there is a
computer with internal memory size $M$, and an external memory
(typically called for historical reasons the \emph{disk}), and data
can be moved back and forth between internal memory and disk in blocks
of a given size $B$ at unit cost. The underlying premise is that since
disk is so much slower than internal memory, counting the number of
disk-block transfers, while ignoring all else, is a good model of
runtime. A classic data structure in the DAM model is the B-tree
\cite{DBLP:journals/acta/BayerM72}.

However as the modern computer has evolved, it has become not just a
two-level hierarchy but a multi-level hierarchy, ranging from the
registers to the main storage, with several levels of cache in
between. Each level has a smaller amount of faster memory than the
previous one. The most successful attempt to cope with this hierarchy
to date has been employing \emph{cache-oblivious
  algorithms}~\cite{DBLP:journals/talg/FrigoLPR12}, which are not
parameterized by $M$ and $B$.  These algorithms are still analyzed in
a two-level memory hierarchy like the DAM, but the values of $M$ and
$B$ are used only in the analysis and not known to the algorithm.  By
doing the analysis for an arbitrary two-level memory hierarchy, it
applies automatically to all the levels of a multi-level memory
hierarchy.

Cache-oblivious algorithms are analyzed in the ideal-cache
model~\cite{DBLP:journals/talg/FrigoLPR12}, which is the same as the
DAM model but for one addition: the system automatically and optimally
decides which block to evict from the internal memory when loading a
new block.  We use the term \emph{cache-oblivious model} to
encapsulate both the ideal-cache model and the restriction that the
algorithm be cache oblivious.  Some DAM-model algorithms are trivially
cache oblivious, for example scanning an array of size $N$ has cost
$\bigoh{N/B+1}$ in both models. Other structures, such as the $B$-tree, are
parametrized on the value of $B$ and thus cache-oblivious alternatives
to the $B$-tree require completely different methods
(e.g.~\cite{DBLP:journals/siamcomp/BenderDF05,DBLP:journals/jal/BenderDIW04}).

\paragraph{Previous persistence results for memory models.} 

In~\cite{DBLP:conf/soda/BrodalTST12} a general method to make a
pointer-based DAM model algorithm fully persistent (assuming constant
in-degree) was presented where updates incur a $\bigoh{\log B}$
slowdown compared to the ephemeral version. 
In \cite{DBLP:conf/icalp/BenderCR02} a partially persistent dictionary for the cache-oblivious model was presented, with a runtime of $\bigoh{\log_B (\vers+N)}$ for all queries and updates, where $N$ is the size of the dictionary and \vers{} is the number of updates.
Both of these results are based on adapting the methods of the original persistence result for the memory model at hand, and are still tied to the view of a structure as being pointer-based. In contrast, we make no assumptions about a pointer-based structure and do not use any of the techniques from~\cite{DBLP:journals/cacm/SarnakT86}.

\paragraph{Persistence in the cache-oblivious model.}
We present a general transformation to make any cache-oblivious data structure partially persistent; this is the first such general result.

In a cache-oblivious algorithm, from the point of view of the algorithm, the only thing that is manipulated is the lowest level of memory (the disk). The algorithm simply acts as if there is one memory array which is initially empty, and the only thing the algorithm can do is look at and change memory cells. 
As the \emph{ideal cache assumption} states that the optimal offline paging can be approximated within a constant factor online, the algorithm does not directly do any memory management; it just looks at and changes the memory array. 
Working directly with the memory array as a whole is a necessary feature of efficient cache-oblivious algorithms, and the overriding design paradigm is to ensure data locality at all levels of granularity.
We define the space used by a cache-oblivious algorithm, \asize, to be the highest address in memory touched so far. 
The runtime of any operation on a particular state of a data structure is expressed as a function of $B$ and $M$ (only), which are unknown to the algorithm and in fact are different for different levels of the memory hierarchy. 
In partial persistence, another parameter of interest is the number
$\vers$ of updates performed. This is simply defined as the total number of memory cells changed by the algorithm. So, the goal is to 
keep a history of all previous states of the data structure, and support queries in the past
for any cache-oblivious structure while minimizing any additional space and time needed.

As a warm up, look at two simple methods to implement cache-oblivious persistence. If one were to copy the entire memory with every update operation one would be able to achieve partial persistence, but with unacceptable space usage and update cost (since this method would have to spend $\bigoh{\asize/B}$ time copying the memory for every update). On the other extreme, one could store for each location in memory a dynamic predecessor-query structure that records all the values held by that cell in memory. While this approach brings down the space usage, it destroys any locality in memory that the algorithm tried to create; for example, an array scan will cost $\bigoh{\asize}$ rather than $\bigoh{\asize/B}$, as array elements which were sequential in the ephemeral structure would no longer be stored sequentially.

\paragraph{Our results.}
The most straightforward way to present our results is in the case when the ephemeral  data structure's runtime does not assume more than a constant sized memory, and does not benefit from keeping anything in memory between operations. These assumptions often correspond to query-based structures, where blocks that happen to be in memory from previous queries are not of use, and are often reflected in the fact that there is no $M$ in the runtime of the structure. In this case our results are as follows: queries in the present take no additional asymptotic cost; queries in the past and updates incur a $\bigoh{\log_{B} \asize}$-factor time overhead over the cost if they were executed with a polynomially smaller block size.
In particular, a runtime of $\querytimeb{B}$ in the ephemeral structure will become $\bigoh{\querytimebthet{B^{\miniexpin}}\log_{B} \gridwidth}$ for any constant $\epsilon > 0$. Thus, as a simple example, a structure of size $\Theta(N)$ and runtime $\bigoh{\log_BN}$ will support queries in the present in time $\bigoh{\log_BN}$, persistent queries in the past and updates in time $\bigoh{\log_{B}^2 N}$.

Structures where memory plays a role are more involved to analyze. There are two cases. One is where memory is needed only within an operation, but the data structure's runtime is not dependent on anything being left in memory from one operation to the next operation. In this case, as in the above, queries in the present will be run as asymptotically fast as in the ephemeral using a memory of size a constant factor smaller. For queries in the past and updates, our transformation will have the effect of using the runtime of the ephemeral where the memory size is reduced by a $\bigoh{B^{1-\miniexpin}\log_{B} \asize}$ and a \bigoh{B^{1-\miniexpin}} factor, respectively. Thus, in this case, a query of time $\querytimemb{M}{B}$ in the ephemeral structure will take time $\bigoh{\querytimembthet{M}{B}}$ for a query in the present, 
$\bigoh{\querytimembthet{M/B^{1-\miniexpin}\log_{B} \asize}{B^{\miniexpin}}\log_{B}\asize}$ for a query in the past, and $\bigoh{\querytimembthet{M/B^{1-\miniexpin}}{B^{\miniexpin}}\log_{B}\asize}$ for an update.

When the runtime of a structure depends on keeping something in memory between operations, the runtimes of the previous paragraph hold for updates and queries in the present. But, for queries in the past, this is not possible. It is impossible to store the memory of all previous versions in memory. For queries in the past, suppose that executing a sequence of queries in the ephemeral takes time $\querytimemb{M}{B}$ if memory was initially empty. By treating the sequence  as one meta query, 
the results of the previous paragraph apply, and 
our structure will execute the sequence in $\bigoh{\querytimembthet{M/B^{1-\miniexpin}\log_{B} \asize}{B^{\miniexpin}}\log_{B} \asize}$ time. Having $\querytimemb{M}{B}$ represent the runtime in the ephemeral assuming memory starts off empty could cause $\querytimemb{M}{B}$ to be higher than without this requirement. Short sequences of operations that have zero or sub-constant cost will have the largest impact, while long or expensive sequences will not.

Data structures often have runtimes that are a function of a single
variable, commonly denoted by $N$, representing the size of the
structure.  This letter makes no appearance in our construction or
analysis, however, as it would be too restrictive---bounds that
include multiple variables (e.g.~graph algorithms) or more complicated
competitive or instance-based analysis can all be accommodated in the
framework of our result. For the same reason we treat space separately
and can freely work with structures with non-linear space or whose
space usage is not neatly expressed as a function of one variable.

All of our update times are amortized, as doubling tricks are used in
the construction. The amortization cannot be removed easily with
standard lazy rebuilding methods, as in the cache-oblivious model one
cannot run an algorithm for a certain amount of time as this would
imply knowledge of $B$.

A lower bound on the space usage is the total number $\vers$ of memory changes that the structure must remember. If the number of updates is sufficiently long compared to ephemeral space usage (in particular, if $V=\Omega(\asize^{\log 3})$), then our construction has a space usage of $\bigoh{V\log \asize}$. 

\paragraph{Overview of methods.}

The basic design principle of a cache-oblivious data structure is to try to ensure as much locality as possible in executing operations. So, our structure tries to  maintain locality. It does so directly by ensuring that the data of any consecutive range of memory of size $w$ of any past version will be stored in a constant number of consecutive ranges of memory of size $\bigoh{w^{\log 3}}$. Our structure is based on viewing changes to memory as points in two dimensions, with one dimension being the memory address, and the other the time the change in memory was made. With this view, our structure is a decomposition of 2-D space-time, with a particular embedding into memory, and routines to update and query.

\section{Preliminaries}


First we define precisely what a cache-oblivious data structure is, and how its runtime is formulated.
A \cods{} manipulates a contiguous array of cells in memory by performing read and write operations on the cells of this array, which we denote by \mem{}. The array \mem{} represents the lowest level of the memory hierarchy. By the ideal cache assumption~\cite{DBLP:journals/talg/FrigoLPR12}, cache-oblivious algorithms need not (and in fact must not) manage transfers between levels of memory; they simply work with the lowest level.

\paragraph{Ephemeral primitives.}

The ephemeral data structure can be viewed as simply wishing to interact with lowest level of memory using reads and writes:

\begin{itemize}
\item
\readop i: returns \memel{i}. 
\item
\writeop ix:  sets $\memel{i}=x$.
\end{itemize}

\paragraph{Ephemeral cache-oblivious runtime and space usage.}
Given a sequence of primitive operations, we can define the runtime to execute the sequence in the cache-oblivious model. The runtime of a single primitive operation depends on the sequence of primitives executed before it, and is a function of $M$ and $B$ that returns either zero if the desired element is in a block in memory or one if it is not. This is computed as follows: for a given $B$, view $A$ as being partitioned into contiguous blocks of size $B$. Then, compute the most recent $M/B$ blocks that have had a primitive operation performed in them. If the current operation is in one of those $M/B$ blocks, its cost is zero; if it is not, its cost is one.

Space usage is simply defined as the largest memory address touched so far, which we require to be upper bounded by a linear function in the number of \kwwrite{} operations. 

\paragraph{Persistence.}

In order to support partial persistence, the notion of a version number is needed. We let $V$ denote the total number of versions so far, which is defined to be the number of \kwwrite{} operations executed so far. Let $A_v$ denote the state of memory array $A$ after the $v$th \kwwrite{} operation. Then, supporting partial persistence is simply a matter of supporting one additional primitive, in addition to \kwread{} and \kwwrite{}:

\begin{itemize}
\item
\preadop vi: returns $A_v[i]$
\end{itemize}

We wish to provide a data structure that will support persistent operations with runtime in the cache-oblivious model as close as possible to that of structures that support only ephemeral operations.

\section{Data Structure}

We view the problem in two dimensions, where an integer grid represents space-time. We say there is a point labeled $x$ at $(i,v)$ if at time $v$ a \writeop{i}{x} was executed, setting $A[i]=x$. Thus the $x$ axis represents space and the $y$ axis represents time; we assume the coordinates increase up and to the right. Let $P$ refer to the set of all points associated with all \kwwrite{} operations. In such a view $P$ is sufficient to answer all \kwread{} and \kwpread{} queries. A \readop{i} simply returns the label of the highest point in $P$ with $x$-coordinate $i$, and \preadop{v}{i} returns the label of the highest point in $P$ at or directly below $(i,v)$. We refer to the point $(i,v)$ to be the value associated with memory location $i$ at time $v$, that is, $A_v[i]$. We denote by $V$ the number of \kwwrite{} operations performed, which is the index of the most recent version.

All of the points lie in a rectangular region bounded by horizontal lines representing time zero at the bottom and the most recent version at the top, and vertical lines representing array location zero (on the left) and the maximum space usage so far (on the right). At a high level, we store a decomposition of this rectangular region into vertically-separated square regions. For each of these square-shaped regions, we use a balanced hierarchical decomposition of the square that stores the points and supports needed queries, which we call the ST-tree. As new points are only added to the top of the structure, only the top square's ST-tree needs to support insertion. As such, the non-top squares' ST-trees are stored in a more space-efficient manner and as new squares are created the old ones are compressed.

\subsection{Space-Time Trees}

We define the \textit{Space-Time Tree} or ST-tree which is the primary data structure we use to store the point set $P$ and support fast queries on $P$. 
This tree is composed of nodes, each of which has an associated space-time rectangle (which we simply call the rectangle of the node). 
See Figure~\ref{fig:space-time}.
The tree has certain properties:

\begin{itemize}

\item Each node at height $h$ in the tree (a leaf is defined to be at height 0) corresponds to a rectangle of width $2^{h}$. This implies all leaves are at the same depth.

\item Internal nodes have two or three children. An internal node's rectangle is partitioned by its children's rectangles.

\item A leaf is \emph{full} if and only if it contains a point in $P$. 
An internal node is full if and only if it has two full children. 
If an internal node has three children, one must be full.

\item Some rectangles may be three sided and open in the upward direction (future time); these are called \emph{open}, while four-sided rectangles are called \emph{closed}. All open rectangles are required to be non-full.

\end{itemize}
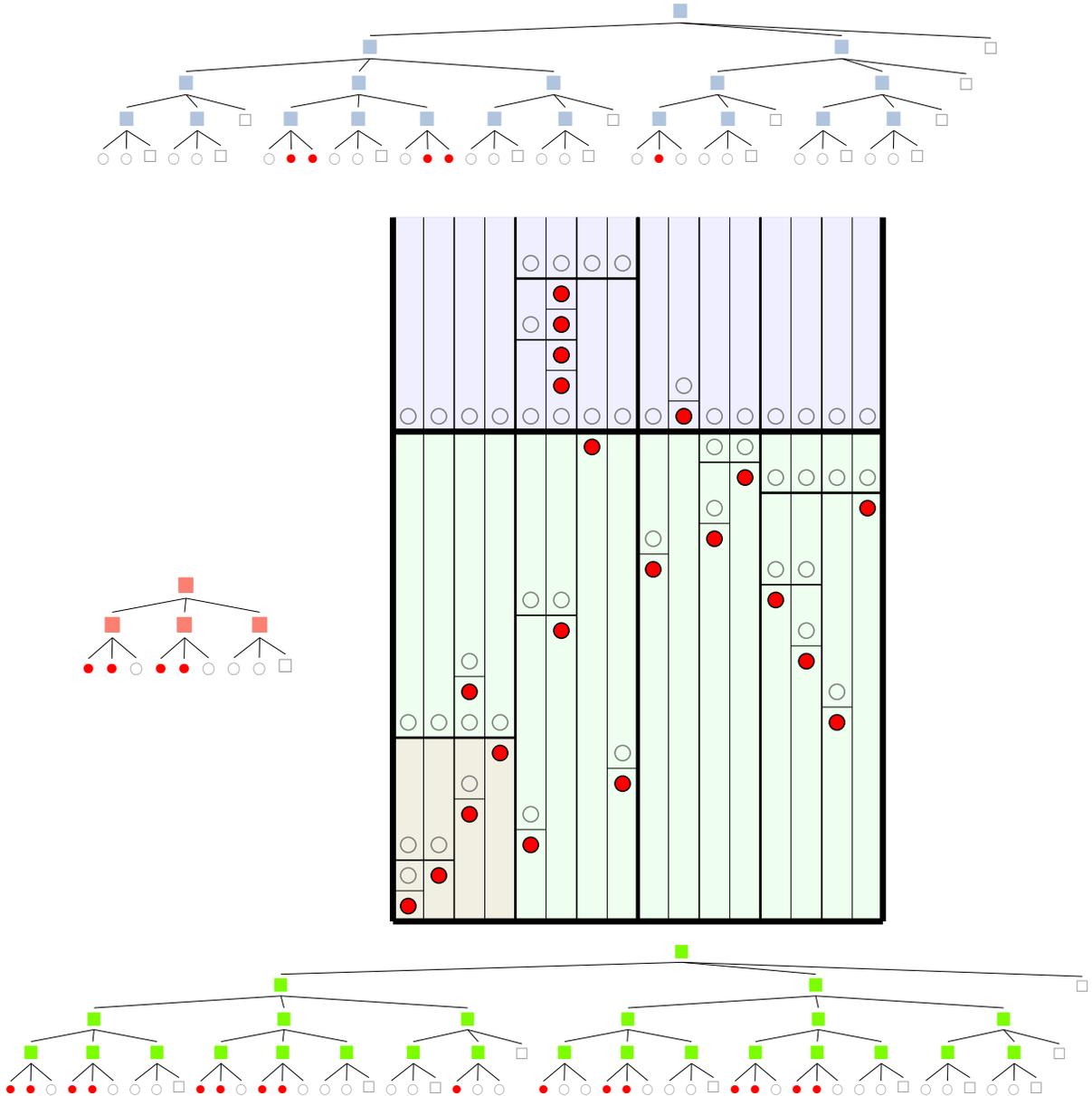
\begin{figure}[!ht]
 \center
\begin{tikzpicture}[scale=0.50]
\Tree
[.{{\LARGE\textcolor{LightSteelBlue}{$\blacksquare$}}} 
[.{{\LARGE\textcolor{LightSteelBlue}{$\blacksquare$}}} 
[.{{\LARGE\textcolor{LightSteelBlue}{$\blacksquare$}}} 
[.{{\LARGE\textcolor{LightSteelBlue}{$\blacksquare$}}} 
[.{{\small\textcolor{Gray}{$\bigcirc$}}} ]
[.{{\small\textcolor{Gray}{$\bigcirc$}}} ]
[.{{\Large\textcolor{Gray}{$\square$}}} ]]
[.{{\LARGE\textcolor{LightSteelBlue}{$\blacksquare$}}} 
[.{{\small\textcolor{Gray}{$\bigcirc$}}} ]
[.{{\small\textcolor{Gray}{$\bigcirc$}}} ]
[.{{\Large\textcolor{Gray}{$\square$}}} ]]
[.{{\Large\textcolor{Gray}{$\square$}}} ]]
[.{{\LARGE\textcolor{LightSteelBlue}{$\blacksquare$}}} 
[.{{\LARGE\textcolor{LightSteelBlue}{$\blacksquare$}}} 
[.{{\small\textcolor{Gray}{$\bigcirc$}}} ]
[.{\textcolor{Red}{$\CIRCLE$}} ]
[.{\textcolor{Red}{$\CIRCLE$}} ]]
[.{{\LARGE\textcolor{LightSteelBlue}{$\blacksquare$}}} 
[.{{\small\textcolor{Gray}{$\bigcirc$}}} ]
[.{{\small\textcolor{Gray}{$\bigcirc$}}} ]
[.{{\Large\textcolor{Gray}{$\square$}}} ]]
[.{{\LARGE\textcolor{LightSteelBlue}{$\blacksquare$}}} 
[.{{\small\textcolor{Gray}{$\bigcirc$}}} ]
[.{\textcolor{Red}{$\CIRCLE$}} ]
[.{\textcolor{Red}{$\CIRCLE$}} ]]]
[.{{\LARGE\textcolor{LightSteelBlue}{$\blacksquare$}}} 
[.{{\LARGE\textcolor{LightSteelBlue}{$\blacksquare$}}} 
[.{{\small\textcolor{Gray}{$\bigcirc$}}} ]
[.{{\small\textcolor{Gray}{$\bigcirc$}}} ]
[.{{\Large\textcolor{Gray}{$\square$}}} ]]
[.{{\LARGE\textcolor{LightSteelBlue}{$\blacksquare$}}} 
[.{{\small\textcolor{Gray}{$\bigcirc$}}} ]
[.{{\small\textcolor{Gray}{$\bigcirc$}}} ]
[.{{\Large\textcolor{Gray}{$\square$}}} ]]
[.{{\Large\textcolor{Gray}{$\square$}}} ]]]
[.{{\LARGE\textcolor{LightSteelBlue}{$\blacksquare$}}} 
[.{{\LARGE\textcolor{LightSteelBlue}{$\blacksquare$}}} 
[.{{\LARGE\textcolor{LightSteelBlue}{$\blacksquare$}}} 
[.{{\small\textcolor{Gray}{$\bigcirc$}}} ]
[.{\textcolor{Red}{$\CIRCLE$}} ]
[.{{\small\textcolor{Gray}{$\bigcirc$}}} ]]
[.{{\LARGE\textcolor{LightSteelBlue}{$\blacksquare$}}} 
[.{{\small\textcolor{Gray}{$\bigcirc$}}} ]
[.{{\small\textcolor{Gray}{$\bigcirc$}}} ]
[.{{\Large\textcolor{Gray}{$\square$}}} ]]
[.{{\Large\textcolor{Gray}{$\square$}}} ]]
[.{{\LARGE\textcolor{LightSteelBlue}{$\blacksquare$}}} 
[.{{\LARGE\textcolor{LightSteelBlue}{$\blacksquare$}}} 
[.{{\small\textcolor{Gray}{$\bigcirc$}}} ]
[.{{\small\textcolor{Gray}{$\bigcirc$}}} ]
[.{{\Large\textcolor{Gray}{$\square$}}} ]]
[.{{\LARGE\textcolor{LightSteelBlue}{$\blacksquare$}}} 
[.{{\small\textcolor{Gray}{$\bigcirc$}}} ]
[.{{\small\textcolor{Gray}{$\bigcirc$}}} ]
[.{{\Large\textcolor{Gray}{$\square$}}} ]]
[.{{\Large\textcolor{Gray}{$\square$}}} ]]
[.{{\Large\textcolor{Gray}{$\square$}}} ]]
[.{{\Large\textcolor{Gray}{$\square$}}} ]]
\end{tikzpicture}\vspace{2mm}
\linebreak
\begin{minipage}[t]{0.25\linewidth}\vspace{150pt}
\begin{tikzpicture}[scale=0.55]
\Tree
[.{{\LARGE\textcolor{Salmon}{$\blacksquare$}}} 
[.{{\LARGE\textcolor{Salmon}{$\blacksquare$}}} 
[.{\textcolor{Red}{$\CIRCLE$}} ]
[.{\textcolor{Red}{$\CIRCLE$}} ]
[.{{\small\textcolor{Gray}{$\bigcirc$}}} ]]
[.{{\LARGE\textcolor{Salmon}{$\blacksquare$}}} 
[.{\textcolor{Red}{$\CIRCLE$}} ]
[.{\textcolor{Red}{$\CIRCLE$}} ]
[.{{\small\textcolor{Gray}{$\bigcirc$}}} ]]
[.{{\LARGE\textcolor{Salmon}{$\blacksquare$}}} 
[.{{\small\textcolor{Gray}{$\bigcirc$}}} ]
[.{{\small\textcolor{Gray}{$\bigcirc$}}} ]
[.{{\Large\textcolor{Gray}{$\square$}}} ]]]
\end{tikzpicture}
\end{minipage}  \hspace*{.2cm}  
\begin{minipage}[t]{0.55\linewidth}\vspace{0pt}
\begin{tikzpicture}[scale=0.45]
\draw [blue!40, fill, opacity=.15] (0,16) rectangle (16,23);
\draw [green!40, fill, opacity=.15] (0,0) rectangle (16,16);
\draw [red!40, fill, opacity=.15] (0,0) rectangle (4,6);
\draw [line width=2.5732799] (0,0)--(16,0);
\draw [line width=2.5732799] (0,0)--(0,23);
\draw [line width=2.5732799] (16,0)--(16,23);
\draw [line width=0.3,black] (1,0)--(1,23);
\draw [line width=0.3,black] (2,0)--(2,23);
\draw [line width=0.3,black] (3,0)--(3,23);
\draw [line width=0.3,black] (4,0)--(4,23);
\draw [line width=0.3,black] (5,0)--(5,23);
\draw [line width=0.3,black] (6,0)--(6,23);
\draw [line width=0.3,black] (7,0)--(7,23);
\draw [line width=0.3,black] (8,0)--(8,23);
\draw [line width=0.3,black] (9,0)--(9,23);
\draw [line width=0.3,black] (10,0)--(10,23);
\draw [line width=0.3,black] (11,0)--(11,23);
\draw [line width=0.3,black] (12,0)--(12,23);
\draw [line width=0.3,black] (13,0)--(13,23);
\draw [line width=0.3,black] (14,0)--(14,23);
\draw [line width=0.3,black] (15,0)--(15,23);
\draw [line width=0.62,black] (2,0)--(2,23);
\draw [line width=0.62,black] (4,0)--(4,23);
\draw [line width=0.62,black] (6,0)--(6,23);
\draw [line width=0.62,black] (8,0)--(8,23);
\draw [line width=0.62,black] (10,0)--(10,23);
\draw [line width=0.62,black] (12,0)--(12,23);
\draw [line width=0.62,black] (14,0)--(14,23);
\draw [line width=1.068,black] (4,0)--(4,23);
\draw [line width=1.068,black] (8,0)--(8,23);
\draw [line width=1.068,black] (12,0)--(12,23);
\draw [line width=1.6952,black] (8,0)--(8,23);
\draw [line width=0.3,black] (0,1)--(1,1);
\draw [semithick, black, fill=red] (0.5,0.5) circle [radius=0.25];
\draw [semithick, gray] (0.5,1.5) circle [radius=0.25];
\draw [line width=0.62,black] (1,2)--(2,2);
\draw [semithick, black, fill=red] (1.5,1.5) circle [radius=0.25];
\draw [semithick, gray] (1.5,2.5) circle [radius=0.25];
\draw [line width=0.62,black] (0,2)--(1,2);
\draw [semithick, gray] (0.5,2.5) circle [radius=0.25];
\draw [line width=0.3,black] (2,4)--(3,4);
\draw [semithick, black, fill=red] (2.5,3.5) circle [radius=0.25];
\draw [semithick, gray] (2.5,4.5) circle [radius=0.25];
\draw [line width=1.068,black] (3,6)--(4,6);
\draw [semithick, black, fill=red] (3.5,5.5) circle [radius=0.25];
\draw [semithick, gray] (3.5,6.5) circle [radius=0.25];
\draw [line width=1.068,black] (2,6)--(3,6);
\draw [semithick, gray] (2.5,6.5) circle [radius=0.25];
\draw [line width=1.068,black] (0,6)--(1,6);
\draw [semithick, gray] (0.5,6.5) circle [radius=0.25];
\draw [line width=1.068,black] (1,6)--(2,6);
\draw [semithick, gray] (1.5,6.5) circle [radius=0.25];
\draw [line width=0.3,black] (4,3)--(5,3);
\draw [semithick, black, fill=red] (4.5,2.5) circle [radius=0.25];
\draw [semithick, gray] (4.5,3.5) circle [radius=0.25];
\draw [line width=0.62,black] (5,10)--(6,10);
\draw [semithick, black, fill=red] (5.5,9.5) circle [radius=0.25];
\draw [semithick, gray] (5.5,10.5) circle [radius=0.25];
\draw [line width=0.62,black] (4,10)--(5,10);
\draw [semithick, gray] (4.5,10.5) circle [radius=0.25];
\draw [line width=2.5732799,black] (6,16)--(7,16);
\draw [semithick, black, fill=red] (6.5,15.5) circle [radius=0.25];
\draw [semithick, gray] (6.5,16.5) circle [radius=0.25];
\draw [line width=0.3,black] (7,5)--(8,5);
\draw [semithick, black, fill=red] (7.5,4.5) circle [radius=0.25];
\draw [semithick, gray] (7.5,5.5) circle [radius=0.25];
\draw [line width=2.5732799,black] (7,16)--(8,16);
\draw [semithick, gray] (7.5,16.5) circle [radius=0.25];
\draw [line width=2.5732799,black] (4,16)--(5,16);
\draw [semithick, gray] (4.5,16.5) circle [radius=0.25];
\draw [line width=2.5732799,black] (5,16)--(6,16);
\draw [semithick, gray] (5.5,16.5) circle [radius=0.25];
\draw [line width=2.5732799,black] (0,16)--(1,16);
\draw [semithick, gray] (0.5,16.5) circle [radius=0.25];
\draw [line width=2.5732799,black] (1,16)--(2,16);
\draw [semithick, gray] (1.5,16.5) circle [radius=0.25];
\draw [line width=0.3,black] (2,8)--(3,8);
\draw [semithick, black, fill=red] (2.5,7.5) circle [radius=0.25];
\draw [semithick, gray] (2.5,8.5) circle [radius=0.25];
\draw [line width=2.5732799,black] (3,16)--(4,16);
\draw [semithick, gray] (3.5,16.5) circle [radius=0.25];
\draw [line width=2.5732799,black] (2,16)--(3,16);
\draw [semithick, gray] (2.5,16.5) circle [radius=0.25];
\draw [line width=0.3,black] (8,12)--(9,12);
\draw [semithick, black, fill=red] (8.5,11.5) circle [radius=0.25];
\draw [semithick, gray] (8.5,12.5) circle [radius=0.25];
\draw [line width=2.5732799,black] (9,16)--(10,16);
\draw [semithick, gray] (9.5,16.5) circle [radius=0.25];
\draw [line width=2.5732799,black] (8,16)--(9,16);
\draw [semithick, gray] (8.5,16.5) circle [radius=0.25];
\draw [line width=0.3,black] (10,13)--(11,13);
\draw [semithick, black, fill=red] (10.5,12.5) circle [radius=0.25];
\draw [semithick, gray] (10.5,13.5) circle [radius=0.25];
\draw [line width=0.62,black] (11,15)--(12,15);
\draw [semithick, black, fill=red] (11.5,14.5) circle [radius=0.25];
\draw [semithick, gray] (11.5,15.5) circle [radius=0.25];
\draw [line width=0.62,black] (10,15)--(11,15);
\draw [semithick, gray] (10.5,15.5) circle [radius=0.25];
\draw [line width=2.5732799,black] (10,16)--(11,16);
\draw [semithick, gray] (10.5,16.5) circle [radius=0.25];
\draw [line width=2.5732799,black] (11,16)--(12,16);
\draw [semithick, gray] (11.5,16.5) circle [radius=0.25];
\draw [line width=0.62,black] (12,11)--(13,11);
\draw [semithick, black, fill=red] (12.5,10.5) circle [radius=0.25];
\draw [semithick, gray] (12.5,11.5) circle [radius=0.25];
\draw [line width=0.3,black] (13,9)--(14,9);
\draw [semithick, black, fill=red] (13.5,8.5) circle [radius=0.25];
\draw [semithick, gray] (13.5,9.5) circle [radius=0.25];
\draw [line width=0.62,black] (13,11)--(14,11);
\draw [semithick, gray] (13.5,11.5) circle [radius=0.25];
\draw [line width=0.3,black] (14,7)--(15,7);
\draw [semithick, black, fill=red] (14.5,6.5) circle [radius=0.25];
\draw [semithick, gray] (14.5,7.5) circle [radius=0.25];
\draw [line width=1.068,black] (15,14)--(16,14);
\draw [semithick, black, fill=red] (15.5,13.5) circle [radius=0.25];
\draw [semithick, gray] (15.5,14.5) circle [radius=0.25];
\draw [line width=1.068,black] (14,14)--(15,14);
\draw [semithick, gray] (14.5,14.5) circle [radius=0.25];
\draw [line width=1.068,black] (12,14)--(13,14);
\draw [semithick, gray] (12.5,14.5) circle [radius=0.25];
\draw [line width=1.068,black] (13,14)--(14,14);
\draw [semithick, gray] (13.5,14.5) circle [radius=0.25];
\draw [line width=2.5732799,black] (12,16)--(13,16);
\draw [semithick, gray] (12.5,16.5) circle [radius=0.25];
\draw [line width=2.5732799,black] (13,16)--(14,16);
\draw [semithick, gray] (13.5,16.5) circle [radius=0.25];
\draw [line width=2.5732799,black] (14,16)--(15,16);
\draw [semithick, gray] (14.5,16.5) circle [radius=0.25];
\draw [line width=2.5732799,black] (15,16)--(16,16);
\draw [semithick, gray] (15.5,16.5) circle [radius=0.25];
\draw [line width=0.62,black] (4,19)--(5,19);
\draw [semithick, gray] (4.5,19.5) circle [radius=0.25];
\draw [line width=0.3,black] (5,18)--(6,18);
\draw [semithick, black, fill=red] (5.5,17.5) circle [radius=0.25];
\draw [semithick, gray] (5.5,18.5) circle [radius=0.25];
\draw [line width=0.62,black] (5,19)--(6,19);
\draw [semithick, black, fill=red] (5.5,18.5) circle [radius=0.25];
\draw [semithick, gray] (5.5,19.5) circle [radius=0.25];
\draw [line width=1.068,black] (6,21)--(7,21);
\draw [semithick, gray] (6.5,21.5) circle [radius=0.25];
\draw [line width=1.068,black] (7,21)--(8,21);
\draw [semithick, gray] (7.5,21.5) circle [radius=0.25];
\draw [line width=1.068,black] (4,21)--(5,21);
\draw [semithick, gray] (4.5,21.5) circle [radius=0.25];
\draw [line width=0.3,black] (5,20)--(6,20);
\draw [semithick, black, fill=red] (5.5,19.5) circle [radius=0.25];
\draw [semithick, gray] (5.5,20.5) circle [radius=0.25];
\draw [line width=1.068,black] (5,21)--(6,21);
\draw [semithick, black, fill=red] (5.5,20.5) circle [radius=0.25];
\draw [semithick, gray] (5.5,21.5) circle [radius=0.25];
\draw [line width=0.3,black] (9,17)--(10,17);
\draw [semithick, black, fill=red] (9.5,16.5) circle [radius=0.25];
\draw [semithick, gray] (9.5,17.5) circle [radius=0.25];
\end{tikzpicture}
\end{minipage}   
\vspace{2mm}\linebreak
\begin{tikzpicture}[scale=0.47]
\Tree
[.{{\LARGE\textcolor{LawnGreen}{$\blacksquare$}}} 
[.{{\LARGE\textcolor{LawnGreen}{$\blacksquare$}}} 
[.{{\LARGE\textcolor{LawnGreen}{$\blacksquare$}}} 
[.{{\LARGE\textcolor{LawnGreen}{$\blacksquare$}}} 
[.{\textcolor{Red}{$\CIRCLE$}} ]
[.{\textcolor{Red}{$\CIRCLE$}} ]
[.{{\small\textcolor{Gray}{$\bigcirc$}}} ]]
[.{{\LARGE\textcolor{LawnGreen}{$\blacksquare$}}} 
[.{\textcolor{Red}{$\CIRCLE$}} ]
[.{\textcolor{Red}{$\CIRCLE$}} ]
[.{{\small\textcolor{Gray}{$\bigcirc$}}} ]]
[.{{\LARGE\textcolor{LawnGreen}{$\blacksquare$}}} 
[.{{\small\textcolor{Gray}{$\bigcirc$}}} ]
[.{{\small\textcolor{Gray}{$\bigcirc$}}} ]
[.{{\Large\textcolor{Gray}{$\square$}}} ]]]
[.{{\LARGE\textcolor{LawnGreen}{$\blacksquare$}}} 
[.{{\LARGE\textcolor{LawnGreen}{$\blacksquare$}}} 
[.{\textcolor{Red}{$\CIRCLE$}} ]
[.{\textcolor{Red}{$\CIRCLE$}} ]
[.{{\small\textcolor{Gray}{$\bigcirc$}}} ]]
[.{{\LARGE\textcolor{LawnGreen}{$\blacksquare$}}} 
[.{\textcolor{Red}{$\CIRCLE$}} ]
[.{\textcolor{Red}{$\CIRCLE$}} ]
[.{{\small\textcolor{Gray}{$\bigcirc$}}} ]]
[.{{\LARGE\textcolor{LawnGreen}{$\blacksquare$}}} 
[.{{\small\textcolor{Gray}{$\bigcirc$}}} ]
[.{{\small\textcolor{Gray}{$\bigcirc$}}} ]
[.{{\Large\textcolor{Gray}{$\square$}}} ]]]
[.{{\LARGE\textcolor{LawnGreen}{$\blacksquare$}}} 
[.{{\LARGE\textcolor{LawnGreen}{$\blacksquare$}}} 
[.{{\small\textcolor{Gray}{$\bigcirc$}}} ]
[.{{\small\textcolor{Gray}{$\bigcirc$}}} ]
[.{{\Large\textcolor{Gray}{$\square$}}} ]]
[.{{\LARGE\textcolor{LawnGreen}{$\blacksquare$}}} 
[.{\textcolor{Red}{$\CIRCLE$}} ]
[.{{\small\textcolor{Gray}{$\bigcirc$}}} ]
[.{{\small\textcolor{Gray}{$\bigcirc$}}} ]]
[.{{\Large\textcolor{Gray}{$\square$}}} ]]]
[.{{\LARGE\textcolor{LawnGreen}{$\blacksquare$}}} 
[.{{\LARGE\textcolor{LawnGreen}{$\blacksquare$}}} 
[.{{\LARGE\textcolor{LawnGreen}{$\blacksquare$}}} 
[.{\textcolor{Red}{$\CIRCLE$}} ]
[.{{\small\textcolor{Gray}{$\bigcirc$}}} ]
[.{{\small\textcolor{Gray}{$\bigcirc$}}} ]]
[.{{\LARGE\textcolor{LawnGreen}{$\blacksquare$}}} 
[.{\textcolor{Red}{$\CIRCLE$}} ]
[.{\textcolor{Red}{$\CIRCLE$}} ]
[.{{\small\textcolor{Gray}{$\bigcirc$}}} ]]
[.{{\LARGE\textcolor{LawnGreen}{$\blacksquare$}}} 
[.{{\small\textcolor{Gray}{$\bigcirc$}}} ]
[.{{\small\textcolor{Gray}{$\bigcirc$}}} ]
[.{{\Large\textcolor{Gray}{$\square$}}} ]]]
[.{{\LARGE\textcolor{LawnGreen}{$\blacksquare$}}} 
[.{{\LARGE\textcolor{LawnGreen}{$\blacksquare$}}} 
[.{\textcolor{Red}{$\CIRCLE$}} ]
[.{\textcolor{Red}{$\CIRCLE$}} ]
[.{{\small\textcolor{Gray}{$\bigcirc$}}} ]]
[.{{\LARGE\textcolor{LawnGreen}{$\blacksquare$}}} 
[.{\textcolor{Red}{$\CIRCLE$}} ]
[.{\textcolor{Red}{$\CIRCLE$}} ]
[.{{\small\textcolor{Gray}{$\bigcirc$}}} ]]
[.{{\LARGE\textcolor{LawnGreen}{$\blacksquare$}}} 
[.{{\small\textcolor{Gray}{$\bigcirc$}}} ]
[.{{\small\textcolor{Gray}{$\bigcirc$}}} ]
[.{{\Large\textcolor{Gray}{$\square$}}} ]]]
[.{{\LARGE\textcolor{LawnGreen}{$\blacksquare$}}} 
[.{{\LARGE\textcolor{LawnGreen}{$\blacksquare$}}} 
[.{{\small\textcolor{Gray}{$\bigcirc$}}} ]
[.{{\small\textcolor{Gray}{$\bigcirc$}}} ]
[.{{\Large\textcolor{Gray}{$\square$}}} ]]
[.{{\LARGE\textcolor{LawnGreen}{$\blacksquare$}}} 
[.{{\small\textcolor{Gray}{$\bigcirc$}}} ]
[.{{\small\textcolor{Gray}{$\bigcirc$}}} ]
[.{{\Large\textcolor{Gray}{$\square$}}} ]]
[.{{\Large\textcolor{Gray}{$\square$}}} ]]]
[.{{\Large\textcolor{Gray}{$\square$}}} ]]
\end{tikzpicture}
\caption{The integer grid representing a sequence of \kwwrite{} operations is shown for $U=16$. Red circles represent points of $P$. Hollow circles represent the answers to the \kwpread{} queries at the bottom of each leaf node's rectangle. Each rectangle is associated with a node in an ST-tree. Three subtrees of the ST-tree are also highlighted. The blue shaded rectangle (and the tree above) corresponds to the top tree. The green (plus red) shaded rectangle (and the tree below) corresponds to the bottom tree. The red shaded rectangle corresponds to the subtree shown to the left of the grid. \\ \\
}

\label{fig:space-time}

\end{figure}

The above conditions imply that a node's rectangle is partitioned by the rectangles of half the width belonging to its children which we call left, right, and if there is a third one, upper. 
Each node stores:

\begin{itemize}

\item Pointers to the three children (left, right and upper) and parent.

\item The coordinates of its rectangle.

\end{itemize}
Additionally, each leaf node (height 0 and thus width 1) stores the following

\begin{itemize}

\item The at most 1 point in $P$ that intersects the rectangle of the leaf.

\item The results of a \kwpread{} corresponding to the point at the bottom of the rectangle.

\end{itemize}

\begin{lem}
\label{lem:rectlb}
If a node at height $h$ is full, then its rectangle intersects at least $2^h$ points in $P$.
\end{lem}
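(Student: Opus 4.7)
The plan is to prove this by straightforward induction on the height $h$ of the node, using the recursive structure of the definition of \emph{full}.

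For the base case $h=0$, a full node is by definition a leaf whose rectangle contains a point of $P$, so it intersects $2^0=1$ point, as required.

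For the inductive step, suppose the claim holds for all heights less than $h$, and let $n$ be a full node at height $h$. By the bullet stating that nodes at height $h$ have width $2^h$ and are partitioned by their children's rectangles (of width $2^{h-1}$), every child of $n$ sits at height $h-1$. By the definition of full for internal nodes, $n$ has (at least) two full children; call them $c_1$ and $c_2$. By the inductive hypothesis, each of $c_1,c_2$ intersects at least $2^{h-1}$ points of $P$. Since the rectangles of the children partition the rectangle of $n$, these two point sets are disjoint subsets of the points intersecting $n$'s rectangle, giving at least $2\cdot 2^{h-1}=2^h$ points. Any additional full upper child only contributes more points, which is consistent with the bound.

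I do not anticipate a real obstacle here; the only point that warrants care is confirming that the children of a height-$h$ node indeed sit at height $h-1$ (so that the inductive hypothesis applies with the correct exponent), and that the partition property ensures disjointness of the point sets counted in the two full children. Both follow directly from the bullet list defining ST-trees, so the proof should fit in a few lines.
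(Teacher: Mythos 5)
Your proof is correct and follows essentially the same route as the paper's: the paper compresses the same induction into one sentence, appealing to the facts that full internal nodes have at least two full children, full leaves contain a point of $P$, and the children's rectangles partition the parent's (hence disjointness). Your explicit induction on height just spells out what the paper leaves implicit.
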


\begin{proof}
Follows directly from the fact that full nodes have at least two full children, full leaves intersect one point in $P$ contained in their rectangle, and the rectangles of all leaves are disjoint and at the same level.
\end{proof}

\subsection{Global data structure}
\label{subsec:globalds}

\begin{itemize}

\item 
The variable $U$ will be stored and will represent the space usage of the data structure rounded up to the next power of two.

\item The data structure will store an array of $\lceil \frac{V}{U} \rceil$ ST-trees. The last one is called the \emph{top} tree, and the others are called \emph{bottom} trees.
The tree's root's rectangles will partition the grid $[1..U]\times[1..\infty]$.
Bottom tree's roots correspond to squares of size $U$; the $j$-th bottom root corresponds to square $[1..U]\times[((j-1)U+1)..jU]$. The top tree's rectangle is the remaining three-sided rectangle $[1..U]\times[((\lceil \frac{V}{U} \rceil-1)U+1)..\infty]$.

\item An array storing a log of all $V$ \kwwrite{} operations which will be used for  rebuilding. 

\item A current memory array, call it $C$, which is of size $U$. The entry $C[i]$ contains the value of $A[i]$ at the present, and a pointer to the leaf containing the highest point in $P$ in column $i$; which also contains the value of $A[i]$ at the present.

\item A pointer $p$ to the leaf corresponding to the most recent \kwpread{}.

\end{itemize}

\subsection{How the ST-trees are stored}

The roots of all ST-trees correspond to rectangles of width $U$ and thus have height $\log U$, and structurally are subgraphs of the complete 3-ary tree of height $\log U$. 

The top tree is stored in memory in a brute force way as a complete $3$-ary tree of height $\log U$ using a biased Van Emde Boas layout \cite{DBLP:journals/talg/FrigoLPR12}\ozc{toJI: where is the biased ref?}. This layout depends on a constant $0<\epsilon<1$ and can be viewed as partitioning a tree of height $h$ into a top tree of height $\epsilon h$ and $3^{\epsilon h}$ bottom trees of height ${h(1-\epsilon)}$. Each of the nodes of these trees is then placed recursively into memory.
This will waste some space as nodes and subtrees that do not exist in the tree will have space reserved for them in memory. Thus the top ST-tree uses space $U^{\log 3}$.

There will be a level of the van Emde Boas layout  that includes the leaves and has size in the range
$3^{\log_3 B}=B$ to $3^{(1-\epsilon )\log_3 B}=B^{1-\epsilon}$ nodes.
Any path of length $k$ in an ST-tree will be stored in \bigoh{1+\frac{k}{\log B}} blocks. Additionally, we have the following lemma:

\begin{lem} 
\label{lem:binarysubgraph}
Any induced binary tree of height $h$ will be stored in \bigoh{1+\frac{2^h}{B^{(1-\epsilon)/\log 3} }} blocks. 
\end{lem}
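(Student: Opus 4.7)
The plan is to solve a recurrence on $h$ for $T(h)$, the worst-case number of blocks needed to store a height-$h$ induced binary subtree in the biased van~Emde~Boas layout. Biased vEB partitions a complete 3-ary tree of height $h$ into a top subtree of height $\epsilon h$ and $3^{\epsilon h}$ bottom subtrees of height $(1-\epsilon)h$, recursing on each and laying them out in consecutive memory regions. Any induced binary subtree has at most $2^{\epsilon h}$ nodes at depth $\epsilon h$, so it enters at most $2^{\epsilon h}$ of the bottom vEB subtrees; inside each such bottom it is an induced binary subtree of height at most $(1-\epsilon)h$, while the part falling in the top vEB subtree is itself an induced binary subtree of height at most $\epsilon h$. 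This yields
\[
T(h) \;\le\; T(\epsilon h) + 2^{\epsilon h}\, T\bigl((1-\epsilon)h\bigr),
\]
with base case $T(h) = O(1)$ whenever the enclosing vEB subtree has size at most $B$, i.e.\ whenever $h \le \log_3 B$ (by the paragraph preceding the lemma).

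Next I would unroll the recurrence along the bottom chain $h_0 = h$, $h_{j+1} = (1-\epsilon)h_j$, stopping at the least $k^*$ for which $h_{k^*} \le \log_3 B$. The exponents telescope: after $k$ unrollings the coefficient of $T(h_k)$ is $2^{h(1-(1-\epsilon)^k)} = 2^{h - h_k}$, and the side term produced at level $j$ is $2^{h - h_j}\,T(\epsilon h_j)$. By the choice of $k^*$, $h_{k^*} \in \bigl((1-\epsilon)\log_3 B,\ \log_3 B\bigr]$, so the ``deep'' contribution $2^{h - h_{k^*}}\,T(h_{k^*})$ is $O\!\left(2^h/2^{(1-\epsilon)\log_3 B}\right) = O\!\left(2^h/B^{(1-\epsilon)/\log 3}\right)$, which is exactly the target bound.

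The main obstacle is controlling the side terms $2^{h-h_j}\,T(\epsilon h_j)$ for $0 \le j < k^*$. I would handle these by strong induction on $h$: reapplying the claimed bound to $T(\epsilon h_j)$ (or the base case when $\epsilon h_j \le \log_3 B$) gives a per-level contribution of $2^{h - (1-\epsilon)h_j - (1-\epsilon)\log_3 B} = 2^{h - h_{j+1} - (1-\epsilon)\log_3 B}$. Summing this geometric progression in $j$ and using $h_{j+1} > (1-\epsilon)\log_3 B$ throughout, the total of all side contributions is smaller than the deep term by a factor of roughly $B^{-(1-\epsilon)/\log 3}$ times the $O(\log h)$ number of unrollings, hence absorbed into the big-$O$ for parameter regimes of interest. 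The additive $O(1)$ in the stated bound handles the small case $h \le \log_3 B$, for which the whole binary subtree already lies inside one critical vEB subtree and the base case applies directly.
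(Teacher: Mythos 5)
Your recurrence-based argument reaches the right bound, but by a genuinely different and heavier route than the paper's. The paper avoids the recurrence entirely: it fixes the single critical level of the van Emde Boas recursion at which the recursive subtrees first fit in a block, observes that this happens at some $3$-ary height $h'$ with $(1-\epsilon)\log_3 B \le h' \le \log_3 B$ (since the biased split shrinks heights by a factor of only $1-\epsilon$ per step, the first recursive subtree of size at most $B$ still has size at least $B^{1-\epsilon}$), and then does a one-shot count: the induced binary tree has at most $2^{h-h'}$ nodes at height $h'$, each rooting a piece stored contiguously in at most two blocks, and at most $2^{h-h'}-1$ nodes above height $h'$, each charged a whole block in the worst case. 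The total is $O(2^{h-h'}) = O(2^h/B^{(1-\epsilon)/\log 3})$ with no induction and no side terms. Your unrolling along the bottom chain terminates at exactly this critical height ($h_{k^*}$), so your ``deep'' term is precisely the paper's count; all the remaining work in your proof goes into showing that the top-subtree side terms are lower order, which the paper's cut renders unnecessary because everything above the cut is simply charged one block per node. What your approach buys is that it works directly from the recursive definition of the layout without having to identify the critical level up front; what it costs is the bookkeeping below.

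That bookkeeping is the one place you need to tighten. Bounding $\sum_{j<k^*} 2^{\,h-h_{j+1}-(1-\epsilon)\log_3 B}$ by (number of unrollings) times (largest term) introduces a spurious $O(\log h)$ factor, and ``absorbed into the big-$O$ for parameter regimes of interest'' will not close a strong induction with a fixed constant when $B$ is fixed and $h$ grows. The repair is to observe that the sum is geometric: consecutive terms differ by a factor $2^{\epsilon h_{j+1}} \ge B^{\epsilon/\log 3}$, which is at least $2$ once $B \ge 3^{1/\epsilon}$ (for smaller $B$ the lemma is trivial, as the tree has only $O(2^h)$ nodes). The side terms then total $O\bigl(2^h/B^{2(1-\epsilon)/\log 3} + 2^h/B^{1/\log 3}\bigr)$, strictly dominated by the deep term, and the induction closes with a constant independent of $h$.
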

\begin{proof}
There is a height $h'$ in the range $\log_3 B$ to $\log_3 B^{1-\epsilon}$ whereby all induced subtrees of nodes at that height will fit into a block. A binary tree of height $h$ will have $2^{h-h'}$ trees of height $h'$, and $2^{h-h'}-1$ nodes of height greater than $h'$. Each tree of height $h'$ is stored in memory in $B$ consecutive locations and therefore intersects at most two blocks, thus, even if each node above height $h'$ is in a different block, the total number of blocks the subtree is stored in is $ \bigoh{2^{h-h'}}
=\bigoh{2^{h-\log_3  B^{1-\epsilon}}}
=\varbigoh{\frac{2^h}{2^{(1-\epsilon)\log_3 B} }}
=\varbigoh{\frac{2^h}{B^{(1-\epsilon)\log_3 2} }}
=\varbigoh{\frac{2^h}{B^{(1-\epsilon)/\log 3} }}
$.
\end{proof}

The bottom trees are stored in a more compressed manner; the nodes appear in the same order as if they were in the top tree, but instead of storing all nodes in a complete 3-ary tree, only those nodes that are actually in the bottom tree are stored. Thus the size of the bottom tree is simply the number of nodes in the tree.  The facts presented for the top trees and Lemma~\ref{lem:binarysubgraph} also hold for the compressed representation.

\subsection{Executing operations}

\subsubsection{\readop{i}} We simply return $C[i]$.

\subsubsection{\preadop{v}{i}} The answer is in the tree leaf containing $(i,v)$, the point associated with this operation. We find this leaf by moving $p$ in the obvious way: move the pointer $p$ to parent nodes until you reach a rectangle containing $(i,v)$ or the root of a tree. If you reach a root, set $p$ the root of the appropriate tree, the $\lfloor v/U \rfloor$th one. Then move the pointer $p$ down until you hit a leaf. The answer is in the leaf.

\subsubsection{\writeop{i}{x}}
We call a node a top node if its rectangle is open. 
We will maintain the invariant that no top node is full. Since \kwwrite{} modifies $P$ by adding a point above all others, this guarantees that the new point will intersect only non-full nodes.

The insertion begins by checking for two special cases. The first is if the memory location $i$ is larger than $U$, which is an upper bound on the highest memory location written so far. In this case, we apply the standard doubling trick and $U$ is doubled, all the trees are destroyed and re-built by re-executing all \kwwrite{}s which are stored in the log as mentioned in Section~\ref{subsec:globalds}. The current memory array $C$ is also doubled in size. 

The other special case is when once every $U$ operations the point associated with the \kwwrite{} is on the $(U+1)$-th row of the rectangle of the top tree and a new top tree is needed. 
In this case, the representation of the existing top tree is compressed and copied as a new bottom tree, removing any unused memory locations, and the top tree is reinitialized as a binary tree where the leafs contain values stored in the corresponding cells of $C$.

Then the main part of the \kwwrite{} operation proceeds as follows.

\begin{itemize}
\item Sets $C[i]=x$
\item Increments $V$
\item Follows the pointer to the leaf containing $(i,V)$. Note that this is a top node. Add the data to this leaf and mark it full. 
This means that the point set $P$ now contains the new \kwwrite{}, as it must. However, it is a top node and is full, which violates the previously stated invariant. We then use the following general reconfiguration to preserve that fact that top nodes can not be full.

\end{itemize}
\paragraph{Reconfiguration.}
When a node becomes full we mark it as such and proceed according to one of the following two cases. 
1) The parent of the node already has three children, in which case the parent also becomes full. We recurse with the parent. 
2) The parent of the node has only two children, which implies the parent is still not full. At this point, we close all open rectangles in the subtree of the current node and add a third child to the parent.
(These procedures are explained below in more detail.)
The rectangle of the new child is on top of the rectangle of the current node. 

In other words, when a leaf node becomes full this leads to that leaf node and 0 or more of its ancestors becoming full. We recurse until the highest such ancestor node $p$ and close every open rectangle in its subtree, and add as the third child to the parent of $p$ a sibling node whose open rectangle is on top of $p$'s newly closed rectangle. 
See Figure~\ref{fig:insseq}.
\begin{figure}
\center

\vspace{1mm}
\begin{minipage}[t]{0.39\linewidth}\vspace{0pt}
\begin{tikzpicture}[scale=0.45]
\draw [line width=1.6952] (0,0)--(8,0);
\draw [line width=1.6952] (0,0)--(0,7);
\draw [line width=1.6952] (8,0)--(8,7);
\draw [line width=0.3,black] (1,0)--(1,7);
\draw [line width=0.3,black] (2,0)--(2,7);
\draw [line width=0.3,black] (3,0)--(3,7);
\draw [line width=0.3,black] (4,0)--(4,7);
\draw [line width=0.3,black] (5,0)--(5,7);
\draw [line width=0.3,black] (6,0)--(6,7);
\draw [line width=0.3,black] (7,0)--(7,7);
\draw [line width=0.62,black] (2,0)--(2,7);
\draw [line width=0.62,black] (4,0)--(4,7);
\draw [line width=0.62,black] (6,0)--(6,7);
\draw [line width=1.068,black] (4,0)--(4,7);
\draw [line width=0.3,black] (0,1)--(1,1);
\draw [semithick, black, fill=red] (0.5,0.5) circle [radius=0.25];
\draw [semithick, gray] (0.5,1.5) circle [radius=0.25];
\end{tikzpicture}
\end{minipage}   
\vspace{1mm}
\begin{minipage}[t]{0.59\linewidth}\vspace{0pt}
\begin{tikzpicture}[scale=0.6]
\Tree
[.{{\LARGE\textcolor{LightSteelBlue}{$\blacksquare$}}} 
[.{{\LARGE\textcolor{LightSteelBlue}{$\blacksquare$}}} 
[.{{\LARGE\textcolor{LightSteelBlue}{$\blacksquare$}}} 
[.{\textcolor{Red}{$\CIRCLE$}} ]
[.{{\small\textcolor{Gray}{$\bigcirc$}}} ]
[.{{\small\textcolor{Gray}{$\bigcirc$}}} ]]
[.{{\LARGE\textcolor{LightSteelBlue}{$\blacksquare$}}} 
[.{{\small\textcolor{Gray}{$\bigcirc$}}} ]
[.{{\small\textcolor{Gray}{$\bigcirc$}}} ]
[.{{\Large\textcolor{Gray}{$\square$}}} ]]
[.{{\Large\textcolor{Gray}{$\square$}}} ]]
[.{{\LARGE\textcolor{LightSteelBlue}{$\blacksquare$}}} 
[.{{\LARGE\textcolor{LightSteelBlue}{$\blacksquare$}}} 
[.{{\small\textcolor{Gray}{$\bigcirc$}}} ]
[.{{\small\textcolor{Gray}{$\bigcirc$}}} ]
[.{{\Large\textcolor{Gray}{$\square$}}} ]]
[.{{\LARGE\textcolor{LightSteelBlue}{$\blacksquare$}}} 
[.{{\small\textcolor{Gray}{$\bigcirc$}}} ]
[.{{\small\textcolor{Gray}{$\bigcirc$}}} ]
[.{{\Large\textcolor{Gray}{$\square$}}} ]]
[.{{\Large\textcolor{Gray}{$\square$}}} ]]
[.{{\Large\textcolor{Gray}{$\square$}}} ]]
\end{tikzpicture}
\end{minipage}

\vspace{1mm}
\begin{minipage}[t]{0.39\linewidth}\vspace{0pt}
\begin{tikzpicture}[scale=0.45]
\draw [line width=1.6952] (0,0)--(8,0);
\draw [line width=1.6952] (0,0)--(0,7);
\draw [line width=1.6952] (8,0)--(8,7);
\draw [line width=0.3,black] (1,0)--(1,7);
\draw [line width=0.3,black] (2,0)--(2,7);
\draw [line width=0.3,black] (3,0)--(3,7);
\draw [line width=0.3,black] (4,0)--(4,7);
\draw [line width=0.3,black] (5,0)--(5,7);
\draw [line width=0.3,black] (6,0)--(6,7);
\draw [line width=0.3,black] (7,0)--(7,7);
\draw [line width=0.62,black] (2,0)--(2,7);
\draw [line width=0.62,black] (4,0)--(4,7);
\draw [line width=0.62,black] (6,0)--(6,7);
\draw [line width=1.068,black] (4,0)--(4,7);
\draw [line width=0.3,black] (0,1)--(1,1);
\draw [semithick, black, fill=red] (0.5,0.5) circle [radius=0.25];
\draw [semithick, gray] (0.5,1.5) circle [radius=0.25];
\draw [line width=0.3,black] (3,2)--(4,2);
\draw [semithick, black, fill=red] (3.5,1.5) circle [radius=0.25];
\draw [semithick, gray] (3.5,2.5) circle [radius=0.25];
\end{tikzpicture}
\end{minipage}   
\vspace{1mm}
\begin{minipage}[t]{0.59\linewidth}\vspace{0pt}
\begin{tikzpicture}[scale=0.6]
\Tree
[.{{\LARGE\textcolor{LightSteelBlue}{$\blacksquare$}}} 
[.{{\LARGE\textcolor{LightSteelBlue}{$\blacksquare$}}} 
[.{{\LARGE\textcolor{LightSteelBlue}{$\blacksquare$}}} 
[.{\textcolor{Red}{$\CIRCLE$}} ]
[.{{\small\textcolor{Gray}{$\bigcirc$}}} ]
[.{{\small\textcolor{Gray}{$\bigcirc$}}} ]]
[.{{\LARGE\textcolor{LightSteelBlue}{$\blacksquare$}}} 
[.{{\small\textcolor{Gray}{$\bigcirc$}}} ]
[.{\textcolor{Red}{$\CIRCLE$}} ]
[.{{\small\textcolor{Gray}{$\bigcirc$}}} ]]
[.{{\Large\textcolor{Gray}{$\square$}}} ]]
[.{{\LARGE\textcolor{LightSteelBlue}{$\blacksquare$}}} 
[.{{\LARGE\textcolor{LightSteelBlue}{$\blacksquare$}}} 
[.{{\small\textcolor{Gray}{$\bigcirc$}}} ]
[.{{\small\textcolor{Gray}{$\bigcirc$}}} ]
[.{{\Large\textcolor{Gray}{$\square$}}} ]]
[.{{\LARGE\textcolor{LightSteelBlue}{$\blacksquare$}}} 
[.{{\small\textcolor{Gray}{$\bigcirc$}}} ]
[.{{\small\textcolor{Gray}{$\bigcirc$}}} ]
[.{{\Large\textcolor{Gray}{$\square$}}} ]]
[.{{\Large\textcolor{Gray}{$\square$}}} ]]
[.{{\Large\textcolor{Gray}{$\square$}}} ]]
\end{tikzpicture}
\end{minipage}

\vspace{1mm}
\begin{minipage}[t]{0.39\linewidth}\vspace{0pt}
\begin{tikzpicture}[scale=0.45]
\draw [line width=1.6952] (0,0)--(8,0);
\draw [line width=1.6952] (0,0)--(0,7);
\draw [line width=1.6952] (8,0)--(8,7);
\draw [line width=0.3,black] (1,0)--(1,7);
\draw [line width=0.3,black] (2,0)--(2,7);
\draw [line width=0.3,black] (3,0)--(3,7);
\draw [line width=0.3,black] (4,0)--(4,7);
\draw [line width=0.3,black] (5,0)--(5,7);
\draw [line width=0.3,black] (6,0)--(6,7);
\draw [line width=0.3,black] (7,0)--(7,7);
\draw [line width=0.62,black] (2,0)--(2,7);
\draw [line width=0.62,black] (4,0)--(4,7);
\draw [line width=0.62,black] (6,0)--(6,7);
\draw [line width=1.068,black] (4,0)--(4,7);
\draw [line width=0.3,black] (0,1)--(1,1);
\draw [semithick, black, fill=red] (0.5,0.5) circle [radius=0.25];
\draw [semithick, gray] (0.5,1.5) circle [radius=0.25];
\draw [line width=0.62,black] (1,3)--(2,3);
\draw [semithick, black, fill=red] (1.5,2.5) circle [radius=0.25];
\draw [semithick, gray] (1.5,3.5) circle [radius=0.25];
\draw [line width=0.62,black] (0,3)--(1,3);
\draw [semithick, gray] (0.5,3.5) circle [radius=0.25];
\draw [line width=0.3,black] (3,2)--(4,2);
\draw [semithick, black, fill=red] (3.5,1.5) circle [radius=0.25];
\draw [semithick, gray] (3.5,2.5) circle [radius=0.25];
\end{tikzpicture}
\end{minipage}   
\vspace{1mm}
\begin{minipage}[t]{0.59\linewidth}\vspace{0pt}
\begin{tikzpicture}[scale=0.6]
\Tree
[.{{\LARGE\textcolor{LightSteelBlue}{$\blacksquare$}}} 
[.{{\LARGE\textcolor{LightSteelBlue}{$\blacksquare$}}} 
[.{{\LARGE\textcolor{LightSteelBlue}{$\blacksquare$}}} 
[.{\textcolor{Red}{$\CIRCLE$}} ]
[.{\textcolor{Red}{$\CIRCLE$}} ]
[.{{\small\textcolor{Gray}{$\bigcirc$}}} ]]
[.{{\LARGE\textcolor{LightSteelBlue}{$\blacksquare$}}} 
[.{{\small\textcolor{Gray}{$\bigcirc$}}} ]
[.{\textcolor{Red}{$\CIRCLE$}} ]
[.{{\small\textcolor{Gray}{$\bigcirc$}}} ]]
[.{{\LARGE\textcolor{LightSteelBlue}{$\blacksquare$}}} 
[.{{\small\textcolor{Gray}{$\bigcirc$}}} ]
[.{{\small\textcolor{Gray}{$\bigcirc$}}} ]
[.{{\Large\textcolor{Gray}{$\square$}}} ]]]
[.{{\LARGE\textcolor{LightSteelBlue}{$\blacksquare$}}} 
[.{{\LARGE\textcolor{LightSteelBlue}{$\blacksquare$}}} 
[.{{\small\textcolor{Gray}{$\bigcirc$}}} ]
[.{{\small\textcolor{Gray}{$\bigcirc$}}} ]
[.{{\Large\textcolor{Gray}{$\square$}}} ]]
[.{{\LARGE\textcolor{LightSteelBlue}{$\blacksquare$}}} 
[.{{\small\textcolor{Gray}{$\bigcirc$}}} ]
[.{{\small\textcolor{Gray}{$\bigcirc$}}} ]
[.{{\Large\textcolor{Gray}{$\square$}}} ]]
[.{{\Large\textcolor{Gray}{$\square$}}} ]]
[.{{\Large\textcolor{Gray}{$\square$}}} ]]
\end{tikzpicture}
\end{minipage}

\vspace{1mm}
\begin{minipage}[t]{0.39\linewidth}\vspace{0pt}
\begin{tikzpicture}[scale=0.45]
\draw [line width=1.6952] (0,0)--(8,0);
\draw [line width=1.6952] (0,0)--(0,7);
\draw [line width=1.6952] (8,0)--(8,7);
\draw [line width=0.3,black] (1,0)--(1,7);
\draw [line width=0.3,black] (2,0)--(2,7);
\draw [line width=0.3,black] (3,0)--(3,7);
\draw [line width=0.3,black] (4,0)--(4,7);
\draw [line width=0.3,black] (5,0)--(5,7);
\draw [line width=0.3,black] (6,0)--(6,7);
\draw [line width=0.3,black] (7,0)--(7,7);
\draw [line width=0.62,black] (2,0)--(2,7);
\draw [line width=0.62,black] (4,0)--(4,7);
\draw [line width=0.62,black] (6,0)--(6,7);
\draw [line width=1.068,black] (4,0)--(4,7);
\draw [line width=0.3,black] (0,1)--(1,1);
\draw [semithick, black, fill=red] (0.5,0.5) circle [radius=0.25];
\draw [semithick, gray] (0.5,1.5) circle [radius=0.25];
\draw [line width=0.62,black] (1,3)--(2,3);
\draw [semithick, black, fill=red] (1.5,2.5) circle [radius=0.25];
\draw [semithick, gray] (1.5,3.5) circle [radius=0.25];
\draw [line width=0.62,black] (0,3)--(1,3);
\draw [semithick, gray] (0.5,3.5) circle [radius=0.25];
\draw [line width=0.3,black] (3,2)--(4,2);
\draw [semithick, black, fill=red] (3.5,1.5) circle [radius=0.25];
\draw [semithick, gray] (3.5,2.5) circle [radius=0.25];
\draw [line width=0.3,black] (7,4)--(8,4);
\draw [semithick, black, fill=red] (7.5,3.5) circle [radius=0.25];
\draw [semithick, gray] (7.5,4.5) circle [radius=0.25];
\end{tikzpicture}
\end{minipage}   
\vspace{1mm}
\begin{minipage}[t]{0.59\linewidth}\vspace{0pt}
\begin{tikzpicture}[scale=0.6]
\Tree
[.{{\LARGE\textcolor{LightSteelBlue}{$\blacksquare$}}} 
[.{{\LARGE\textcolor{LightSteelBlue}{$\blacksquare$}}} 
[.{{\LARGE\textcolor{LightSteelBlue}{$\blacksquare$}}} 
[.{\textcolor{Red}{$\CIRCLE$}} ]
[.{\textcolor{Red}{$\CIRCLE$}} ]
[.{{\small\textcolor{Gray}{$\bigcirc$}}} ]]
[.{{\LARGE\textcolor{LightSteelBlue}{$\blacksquare$}}} 
[.{{\small\textcolor{Gray}{$\bigcirc$}}} ]
[.{\textcolor{Red}{$\CIRCLE$}} ]
[.{{\small\textcolor{Gray}{$\bigcirc$}}} ]]
[.{{\LARGE\textcolor{LightSteelBlue}{$\blacksquare$}}} 
[.{{\small\textcolor{Gray}{$\bigcirc$}}} ]
[.{{\small\textcolor{Gray}{$\bigcirc$}}} ]
[.{{\Large\textcolor{Gray}{$\square$}}} ]]]
[.{{\LARGE\textcolor{LightSteelBlue}{$\blacksquare$}}} 
[.{{\LARGE\textcolor{LightSteelBlue}{$\blacksquare$}}} 
[.{{\small\textcolor{Gray}{$\bigcirc$}}} ]
[.{{\small\textcolor{Gray}{$\bigcirc$}}} ]
[.{{\Large\textcolor{Gray}{$\square$}}} ]]
[.{{\LARGE\textcolor{LightSteelBlue}{$\blacksquare$}}} 
[.{{\small\textcolor{Gray}{$\bigcirc$}}} ]
[.{\textcolor{Red}{$\CIRCLE$}} ]
[.{{\small\textcolor{Gray}{$\bigcirc$}}} ]]
[.{{\Large\textcolor{Gray}{$\square$}}} ]]
[.{{\Large\textcolor{Gray}{$\square$}}} ]]
\end{tikzpicture}
\end{minipage}

\vspace{1mm}
\begin{minipage}[t]{0.39\linewidth}\vspace{0pt}
\begin{tikzpicture}[scale=0.45]
\draw [line width=1.6952] (0,0)--(8,0);
\draw [line width=1.6952] (0,0)--(0,7);
\draw [line width=1.6952] (8,0)--(8,7);
\draw [line width=0.3,black] (1,0)--(1,7);
\draw [line width=0.3,black] (2,0)--(2,7);
\draw [line width=0.3,black] (3,0)--(3,7);
\draw [line width=0.3,black] (4,0)--(4,7);
\draw [line width=0.3,black] (5,0)--(5,7);
\draw [line width=0.3,black] (6,0)--(6,7);
\draw [line width=0.3,black] (7,0)--(7,7);
\draw [line width=0.62,black] (2,0)--(2,7);
\draw [line width=0.62,black] (4,0)--(4,7);
\draw [line width=0.62,black] (6,0)--(6,7);
\draw [line width=1.068,black] (4,0)--(4,7);
\draw [line width=0.3,black] (0,1)--(1,1);
\draw [semithick, black, fill=red] (0.5,0.5) circle [radius=0.25];
\draw [semithick, gray] (0.5,1.5) circle [radius=0.25];
\draw [line width=0.62,black] (1,3)--(2,3);
\draw [semithick, black, fill=red] (1.5,2.5) circle [radius=0.25];
\draw [semithick, gray] (1.5,3.5) circle [radius=0.25];
\draw [line width=0.62,black] (0,3)--(1,3);
\draw [semithick, gray] (0.5,3.5) circle [radius=0.25];
\draw [line width=1.068,black] (2,5)--(3,5);
\draw [semithick, gray] (2.5,5.5) circle [radius=0.25];
\draw [line width=0.3,black] (3,2)--(4,2);
\draw [semithick, black, fill=red] (3.5,1.5) circle [radius=0.25];
\draw [semithick, gray] (3.5,2.5) circle [radius=0.25];
\draw [line width=1.068,black] (3,5)--(4,5);
\draw [semithick, black, fill=red] (3.5,4.5) circle [radius=0.25];
\draw [semithick, gray] (3.5,5.5) circle [radius=0.25];
\draw [line width=1.068,black] (0,5)--(1,5);
\draw [semithick, gray] (0.5,5.5) circle [radius=0.25];
\draw [line width=1.068,black] (1,5)--(2,5);
\draw [semithick, gray] (1.5,5.5) circle [radius=0.25];
\draw [line width=0.3,black] (7,4)--(8,4);
\draw [semithick, black, fill=red] (7.5,3.5) circle [radius=0.25];
\draw [semithick, gray] (7.5,4.5) circle [radius=0.25];
\end{tikzpicture}
\end{minipage}   
\vspace{1mm}
\begin{minipage}[t]{0.59\linewidth}\vspace{0pt}
\begin{tikzpicture}[scale=0.6]
\Tree
[.{{\LARGE\textcolor{LightSteelBlue}{$\blacksquare$}}} 
[.{{\LARGE\textcolor{LightSteelBlue}{$\blacksquare$}}} 
[.{{\LARGE\textcolor{LightSteelBlue}{$\blacksquare$}}} 
[.{\textcolor{Red}{$\CIRCLE$}} ]
[.{\textcolor{Red}{$\CIRCLE$}} ]
[.{{\small\textcolor{Gray}{$\bigcirc$}}} ]]
[.{{\LARGE\textcolor{LightSteelBlue}{$\blacksquare$}}} 
[.{{\small\textcolor{Gray}{$\bigcirc$}}} ]
[.{\textcolor{Red}{$\CIRCLE$}} ]
[.{\textcolor{Red}{$\CIRCLE$}} ]]
[.{{\LARGE\textcolor{LightSteelBlue}{$\blacksquare$}}} 
[.{{\small\textcolor{Gray}{$\bigcirc$}}} ]
[.{{\small\textcolor{Gray}{$\bigcirc$}}} ]
[.{{\Large\textcolor{Gray}{$\square$}}} ]]]
[.{{\LARGE\textcolor{LightSteelBlue}{$\blacksquare$}}} 
[.{{\LARGE\textcolor{LightSteelBlue}{$\blacksquare$}}} 
[.{{\small\textcolor{Gray}{$\bigcirc$}}} ]
[.{{\small\textcolor{Gray}{$\bigcirc$}}} ]
[.{{\Large\textcolor{Gray}{$\square$}}} ]]
[.{{\LARGE\textcolor{LightSteelBlue}{$\blacksquare$}}} 
[.{{\small\textcolor{Gray}{$\bigcirc$}}} ]
[.{\textcolor{Red}{$\CIRCLE$}} ]
[.{{\small\textcolor{Gray}{$\bigcirc$}}} ]]
[.{{\Large\textcolor{Gray}{$\square$}}} ]]
[.{{\LARGE\textcolor{LightSteelBlue}{$\blacksquare$}}} 
[.{{\LARGE\textcolor{LightSteelBlue}{$\blacksquare$}}} 
[.{{\small\textcolor{Gray}{$\bigcirc$}}} ]
[.{{\small\textcolor{Gray}{$\bigcirc$}}} ]
[.{{\Large\textcolor{Gray}{$\square$}}} ]]
[.{{\LARGE\textcolor{LightSteelBlue}{$\blacksquare$}}} 
[.{{\small\textcolor{Gray}{$\bigcirc$}}} ]
[.{{\small\textcolor{Gray}{$\bigcirc$}}} ]
[.{{\Large\textcolor{Gray}{$\square$}}} ]]
[.{{\Large\textcolor{Gray}{$\square$}}} ]]]
\end{tikzpicture}
\end{minipage}
\caption{The integer grid and the resulting tree as a result of executing a sequence of \kwwrite{} operations.}

\label{fig:insseq}

\end{figure}
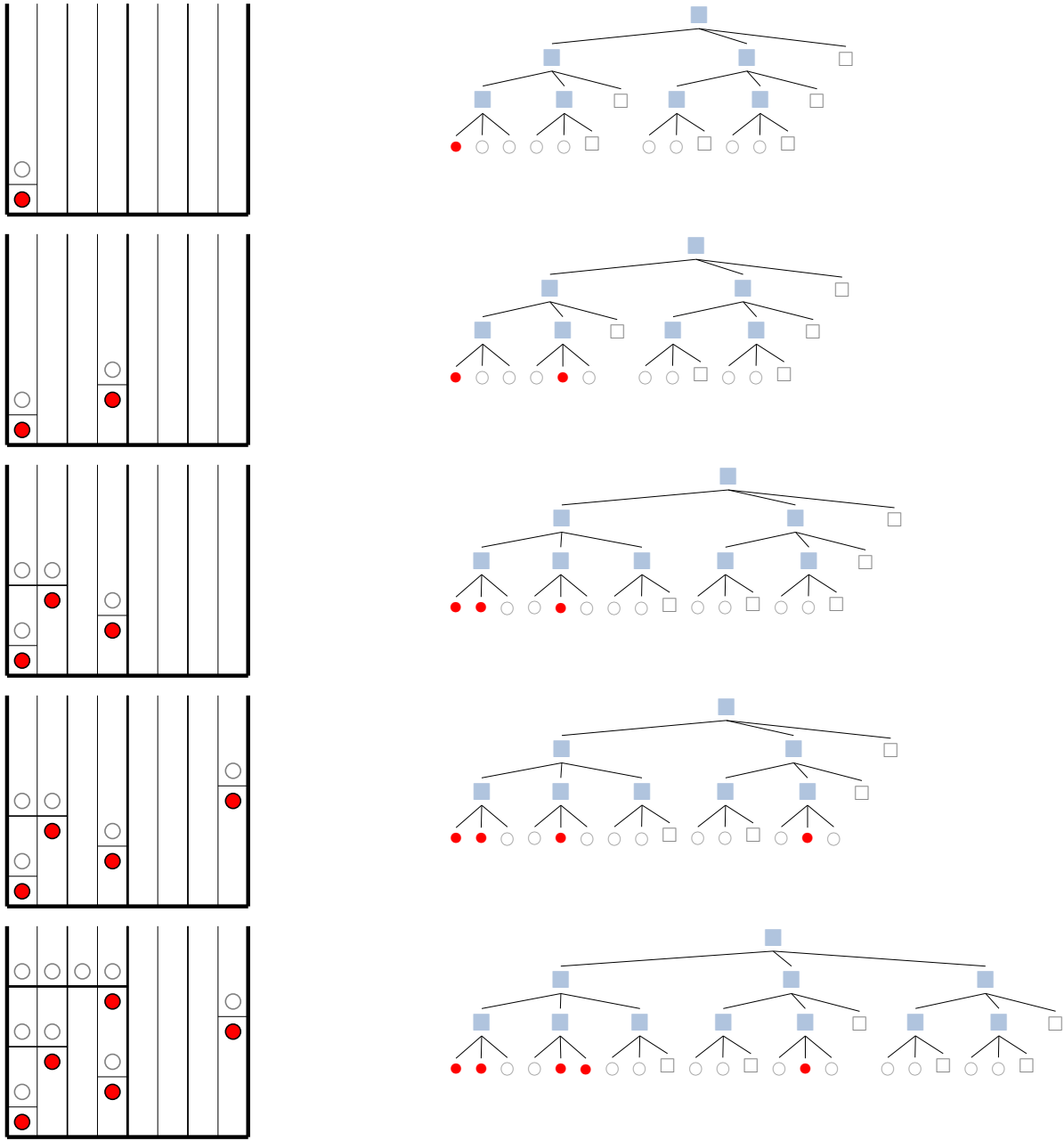

\paragraph{Closing rectangles.} 
We close the open rectangles in the subtree of a node by traversing the 2 children of each node that correspond to open rectangles and changing the top side of each open rectangle from $\infty$ to $V$. 

\paragraph{Adding the third child.}
In order to add a third child to a node at height $h+1$, we create a complete binary tree of height $h$ whose root becomes the third child. Note that the leafs of this tree contain the answers to the \kwpread{} queries at the corresponding points. We copy this information from $C$. 

\begin{lem}
\label{lem:rebuild}
The amortized number of times a node at height $h+1$ gains a third child following an insertion into its subtree is \bigoh{\frac{1}{2^h}}.
\end{lem}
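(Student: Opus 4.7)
My plan is to combine two structural facts---that any node can gain a third child at most once, and that doing so witnesses many prior insertions inside its subtree---to extract the amortized bound with a one-line division.

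First I would note that the reconfiguration procedure only ever adds children (never removes them) and that each node is capped at three children. Hence, for any particular node at height $h+1$, the event ``gains a third child'' can occur at most once during its lifetime; after this single transition from two to three children, no further such event is possible at that node.

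Next I would pin down exactly when the event occurs and apply Lemma~\ref{lem:rectlb}. Inspecting the reconfiguration rules, the only path that installs a new child is Case~2, which fires when the upward propagation of ``becoming full'' halts at some ancestor $p$ whose parent $q$ has exactly two children; the third child is then attached to $q$. At that instant, $p$ has just transitioned to full at height $h$ (since $q$ is at height $h+1$), so by Lemma~\ref{lem:rectlb} the rectangle of $p$ intersects at least $2^h$ points of $P$. Because $p$'s rectangle lies inside $q$'s subtree, at least $2^h$ distinct \kwwrite{} operations have already been performed inside $q$'s subtree by the time $q$ gains its third child.

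Combining the two observations, the single possible gain-event for any fixed node at height $h+1$ is charged against at least $2^h$ insertions into its subtree, giving an amortized rate of at most $1/2^h = \bigoh{1/2^h}$. The only delicate check I would be careful about is confirming that Case~2 is truly the unique mechanism for adding a new child and that the halting ancestor $p$ is genuinely full at that moment; both facts follow directly from the definition of the reconfiguration cascade, after which the rest of the argument is just arithmetic.
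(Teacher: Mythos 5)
Your argument is correct, but it reaches the bound by a genuinely different route than the paper. The paper's proof is a potential-function argument: each full child of a still-open node at height $l\le h$ carries potential $1/2^{h-l}$, so an insertion deposits $1/2^h$ at the leaf level, each cascading ``mark full'' step is potential-neutral, and the potential released when the cascade stops at height $h$ pays the unit cost of attaching the third child. You instead argue in aggregate: a node can gain a third child at most once, and at that moment the triggering child $p$ at height $h$ is full, so by Lemma~\ref{lem:rectlb} at least $2^h$ insertions have already landed in the parent's subtree; dividing gives \bigoh{1/2^h}. This is sound, and your identification of Case~2 as the unique mechanism and of $p$ as full at that instant is accurate. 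The one step worth making explicit is that distinct nodes at height $h+1$ have disjoint rectangles (and hence the triggering children $p$ at height $h$ are distinct nodes with disjoint rectangles), so the sets of $2^h$ points charged by different gain-events at the same height are pairwise disjoint and no insertion is charged twice at a given height; that is what upgrades your per-node observation to the aggregate amortized statement actually used downstream (e.g.\ the $\mathcal{O}(U/2^{k})$ count in Lemma~\ref{lem:compress}). As for what each approach buys: yours is more elementary and makes the combinatorial source of the $2^h$ transparent by reusing Lemma~\ref{lem:rectlb}, while the paper's potential function localizes the accounting to individual insertions without needing the global disjointness observation; both yield the same rate.
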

\begin{proof}
We use a simple potential function where adding a third child on top of a node $p$ at height $h$ has cost 1, and
all top nodes at height $l \leq h$ in the subtree of $p$ have a potential of $\frac{1}{2^{h-l}}$ for each full child they have. 

Observe that each step during the handling of an insert, a top node with two full children becomes no longer a top node and it is marked as full. Thus, since the potential difference at the level of $p$ matches the cost, the amortized cost at any other level is zero except for at the leaf level where the amortized cost is $\frac{1}{2^h}$. 
\end{proof}

\section{Analysis}
\label{sec:analysis}

\subsection{Space usage}

\begin{lem} 
\label{lem:nontopspace}
The space usage of a bottom tree is \bigoh{U \log U}. 
\end{lem}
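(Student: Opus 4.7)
The plan is to bound the total number of nodes in the bottom tree, since in the compressed storage each node occupies only $O(1)$ space. Nodes in an ST-tree are only ever added---either at initialization or by gaining a third child during reconfiguration---and never removed. So it suffices to count all node creations during the lifetime of the tree in its top-tree phase (after which it is simply copied and compressed into a bottom tree without adding or removing nodes).

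There are exactly two sources of nodes. First, when a fresh top tree is initialized, by the construction in Section~\ref{subsec:globalds} it is a binary tree of height $\log U$, contributing at most $2U - 1 = O(U)$ nodes. Second, the only way new nodes are introduced thereafter is through reconfiguration: when a node at height $h+1$ gains a third child, that child is the root of a freshly built complete binary tree of height $h$, introducing $2^{h+1}-1 = O(2^h)$ new nodes.

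To bound the second source, observe that a bottom tree corresponds to a $U \times U$ region of space-time, so during its top-tree life it witnesses exactly $U$ \kwwrite{} operations (one per time step in its vertical span). By Lemma~\ref{lem:rebuild}, the amortized number of times a node at height $h+1$ gains a third child per insertion is $O(1/2^h)$, so the total number of such events at height $h+1$ over the $U$ insertions is $O(U/2^h)$. Each such event contributes $O(2^h)$ new nodes, giving $O(U)$ new nodes per height level. Summing over the $\log U$ heights yields $O(U \log U)$ additional nodes, and adding the $O(U)$ nodes from initialization gives $O(U \log U)$ nodes in total, hence $O(U \log U)$ space.

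\textbf{Main obstacle.} The main thing to verify carefully is that the amortized statement of Lemma~\ref{lem:rebuild} can legitimately be multiplied by the number of insertions to produce a worst-case total, which is fine as long as the potential used in its proof is non-negative and vanishes on the empty tree (both of which hold for the stated potential). One should also check that no other node-creation pathway is hidden in corner cases (for example in the propagation step where a new leaf becomes full and only ancestors, not fresh nodes, are touched until a third child is installed), but these are routine bookkeeping and contribute nothing beyond the charges already accounted for above.
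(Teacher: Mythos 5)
Your proof is correct, but it reaches the bound by a different route than the paper. The paper first reduces node-counting to leaf-counting (every internal node has at least two children), and then charges the $2^h$ empty leaves created under a new third child at height $h$ to the at least $2^h$ points of $P$ guaranteed by Lemma~\ref{lem:rectlb} to lie in the full sibling's rectangle; since each of the $U$ points in the square is charged at most once per level, the $\bigoh{U\log U}$ bound follows. You instead count all node creations directly and bound the frequency of third-child events at each height via the amortized statement of Lemma~\ref{lem:rebuild}, multiplying by the $U$ insertions a top tree receives before being frozen. Both arguments hinge on the same combinatorial fact---a third child at height $h+1$ can only appear after $\Omega(2^h)$ writes have accumulated below---but the paper extracts it from the geometric Lemma~\ref{lem:rectlb} while you extract it from the potential-function Lemma~\ref{lem:rebuild}. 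Your conversion of the amortized bound into a total is legitimate for exactly the reason you state (the potential starts at zero on the freshly initialized top tree, whose leaves are not full, and is always non-negative), and summing over the one relevant ancestor per height per insertion correctly yields $\bigoh{U/2^h}$ events at height $h+1$. The paper's version is marginally more self-contained (it does not depend on the correctness of the amortized update analysis), whereas yours reuses machinery already needed for Lemma~\ref{lem:compress} and avoids the separate full/empty leaf case analysis.
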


\begin{proof}
Since each internal node in the tree has at least two children, we only need to bound only the number of leaves in the tree. 
The leaves of the tree can be partitioned into two sets depending of whether or not they are full. 
Recall that a leaf is full if and only if its rectangle intersects a point in $P$. If a leaf is not full, we call it an empty leaf.
Note that each empty leaf contains the answer to the \kwpread{} operation at the bottom of its rectangle. 
There are $U$ points from $P$ in the square of each bottom tree and we will charge the empty leaves to these $U$ points.
Observe that an empty leaf is only created when we create the initial tree or when we create a new complete subtree as the third child of a node in the tree. 
In the former case, there are at most $U$ leaves and this happens once per each tree. 
In the latter case, call the third child $q$ and denote by $p$ the sibling of $q$ that has a rectangle below the rectangle of $q$. 
Note that by construction, $p$ is full. 
Assuming $q$ and $p$ are of height $h$, by Lemma~\ref{lem:rectlb}, there are at least $2^h$ points from $P$ in the rectangle of $p$. We can charge the $2^h$ leaves created in the subtree of $q$ to these points. How many times can such a point be charged? It is charged at most once per level. There are $\log U$ levels. Thus, there are at most \bigoh{U\log U} leaves in a bottom tree. Since each bottom tree is stored contiguously in space linear in the number of leaves, the lemma follows. 
\end{proof}

\begin{lem}
\label{lem:spacecomplexity}
The space usage of the entire structure is \bigoh{U^{\log 3}+V \log U}.
\end{lem}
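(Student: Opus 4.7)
The plan is to add up the space contributions of the various components described in Section~\ref{subsec:globalds} and the storage conventions for the ST-trees, and verify that the total collapses to the claimed bound.

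First I would recall the count of ST-trees: there are $\lceil V/U \rceil$ of them, exactly one of which is the top tree and the remaining $\lceil V/U \rceil - 1$ are bottom trees. For the top tree, the description says it is laid out as a complete 3-ary tree of height $\log U$ under a biased Van Emde Boas layout, and contributes $O(U^{\log 3})$ space; I would simply cite this directly.

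Next I would bound the bottom trees. By Lemma~\ref{lem:nontopspace}, each bottom tree occupies $O(U \log U)$ space, so summed over the $O(V/U)$ bottom trees the total is
\[
O\!\left(\frac{V}{U} \cdot U \log U\right) = O(V \log U).
\]
Finally I would account for the remaining global structures: the write log contributes $O(V)$, the current memory array $C$ contributes $O(U)$, and a constant number of pointers and the integer $U$ are negligible. Summing everything gives $O(U^{\log 3} + V \log U + V + U)$, and since $U \leq U^{\log 3}$ and $V \leq V \log U$, this simplifies to $O(U^{\log 3} + V \log U)$, as claimed.

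There is no real obstacle here; the lemma is essentially a bookkeeping corollary of Lemma~\ref{lem:nontopspace} together with the stated per-tree bound for the top tree. The only mild subtlety is making sure the correct number of bottom trees is used (namely $O(V/U)$, not $O(V)$), so that the $O(U \log U)$ per-bottom-tree bound amortizes against the $V$ writes rather than blowing up.
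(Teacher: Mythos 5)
Your proposal is correct and follows exactly the same route as the paper's proof: cite the $O(U^{\log 3})$ bound for the top tree, apply Lemma~\ref{lem:nontopspace} to each of the $O(V/U)$ bottom trees, and add the $O(V)$ write log and $O(U)$ current-memory array. The only difference is that you spell out the final simplification step explicitly, which the paper leaves implicit.
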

\begin{proof}
The top tree uses space \bigoh{U^{\log 3}}. From Lemma~\ref{lem:nontopspace}, the \bigoh{V/U} bottom ST-trees use space \bigoh{U \log U} each. The log of \kwwrite{} operations is size \bigoh{V}. The current memory array is size \bigoh{U}. Other global information takes size \bigoh{1}.
\end{proof}

\subsection{Comments on memory allocation}
According to the ideal cache assumption, in the cache-oblivious model the runtime is within a constant factor of that which has the optimal movement of blocks in and out of cache. Thus, we can assume a particular active memory management strategy and a runtime of using this strategy is an upper bound of the runtime in the cache-oblivious model. In particular, we can base our assumed memory management strategy on the one an ephemeral structure would use on the same sequence of operations, with particular memory and block sizes.

Globally, we assume the memory $M$ is split into three equal parts, with each part dedicated to the execution of each of the three primitive operations.

\subsection{Analysis of Read}

\begin{thm}
Suppose a sequence $X$ of \kwread{} operations takes time $T(M,B)$ to execute 
ephemerally in the cache-oblivious model on a machine with memory $M$ and block size $B$ and an initially empty memory. Then in our structure, the runtime is 
\bigoh{T(M/3,B)}.
\end{thm}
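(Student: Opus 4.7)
The plan is straightforward because \kwread{} operations in the persistent structure are implemented by a single lookup $C[i]$ into the current-memory array $C$. The key observation is that $C$ is maintained as a contiguous array of size $U$ whose $i$-th cell holds the present value $A[i]$. So for any sequence $X$ of \kwread{} operations, the pattern of memory cells touched by the persistent structure on $C$ is exactly the pattern of cells that the ephemeral structure would touch on $A$.

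First, I would note that because $C$ is laid out contiguously, the partition of $C$ into blocks of size $B$ is in one-to-one correspondence with the partition of the ephemeral array $A$ into blocks of size $B$, and the contents line up cell-for-cell. Consequently, for any block-replacement strategy the ephemeral execution of $X$ might use with memory $M'$ and block size $B$, the persistent structure can employ the identical strategy restricted to the $C$-portion of its memory, and the resulting block-miss sequence is identical. In particular, executing \kwread{} operations incurs no accesses outside $C$.

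Next, invoking the memory-allocation convention stated in the preceding subsection, the memory $M$ is split into three equal parts, one of size $M/3$ dedicated to \kwread{}. Treating that $M/3$-sized portion of the cache as the cache for the simulated ephemeral execution on $C$, the cost of serving $X$ in our structure is bounded above by the ephemeral cost of $X$ with cache size $M/3$ and block size $B$ starting from an initially empty cache. By hypothesis this cost is $T(M/3, B)$, and by the ideal-cache assumption the optimal offline policy on our structure does no worse than this, giving the $\bigoh{T(M/3,B)}$ bound.

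There is no real obstacle here; the only subtlety worth checking is that no additional block fetches outside $C$ are needed to evaluate $C[i]$, which is immediate from the data-structure description since $C$ itself contains the value and the tree pointer is only consulted for \kwpread{} and \kwwrite{}, not for \kwread{}.
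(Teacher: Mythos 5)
Your proposal is correct and follows the same route as the paper's (much terser) proof: \kwread{} is served entirely from the contiguous current-memory array $C$, which mirrors $A$ cell-for-cell, and the $M/3$ of cache allocated to \kwread{} yields the $T(M/3,B)$ bound via the ideal-cache assumption. Your elaboration of the block-for-block correspondence between $C$ and $A$ is exactly the implicit content of the paper's one-line argument.
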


\begin{proof}
Since executing \kwread{} operations is simply done using the current memory buffer, and there is a memory of size $M/3$ allocated for \kwread{}s, this gives the result.
\end{proof}

\subsection{Analysis of Persistent-Read}

\begin{lem}
\label{lem:blowup}
Consider the leaves of the ST-tree corresponding to a subarray of ephemeral memory of size $w$ at a fixed time $v$ ($A_{v}[i\ldots i+w-1]$ for some $i$). These leaves and all of their ancestors in the ST-tree are contained in \varbigoh{\frac{w}{B^{\miniexpin}}+\log_BU} blocks
\end{lem}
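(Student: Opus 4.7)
The plan is to reduce the block count to a standard range decomposition in a binary tree. My first observation will be that the leaves in question all lie in a single ST-tree~--~namely, the one whose time range contains $v$~--~and that the subgraph of those leaves together with their ancestors is an \emph{induced binary} subtree of that ST-tree. The reason is that at any internal node $n$ the time coordinate $v$ is fixed, so either $v$ lies in $n$'s upper child (in which case only the upper child is followed toward any of our leaves) or $v$ lies in the bottom half (in which case only the left and/or right children are followed); in either case at most two of $n$'s children appear in our subtree. Thus I get an induced binary subtree of height $\log U$.

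Next I would apply the standard binary-tree range decomposition. The induced subtree breaks into: the root-to-leaf spine of column $i$, the root-to-leaf spine of column $i+w-1$, and \bigoh{\log U} \emph{interior subtrees}, each rooted at a node whose rectangle lies fully inside $[i,i+w-1]$. Writing $h_j$ for the height of the $j$th interior subtree, the pairwise-disjoint rectangles of these subtrees fit inside a range of width $w$, so $\sum_j 2^{h_j}\le w$. By the path-storage property noted just before Lemma~\ref{lem:binarysubgraph}, each spine of length $\log U$ occupies \varbigoh{1+\log U/\log B}=\bigoh{\log_B U} blocks, accounting for \bigoh{\log_B U} blocks in all. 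I then pick $h^{\ast}$ at the vEB layout level where every structural subtree fits in a block, so that $2^{h^{\ast}}=\Theta(B^{\miniexpin})$. For each interior subtree with $h_j\ge h^{\ast}$, Lemma~\ref{lem:binarysubgraph} gives \varbigoh{2^{h_j}/B^{\miniexpin}} blocks (the ``$+1$'' is absorbed because $2^{h_j}/B^{\miniexpin}\ge 1$), and summing yields \varbigoh{w/B^{\miniexpin}}.

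The hard part will be the small interior subtrees, those with $h_j<h^{\ast}$: there can be up to \bigoh{\log U} of them, and a naive ``$+1$'' per subtree would dominate the target $\log_B U$. The rescue is to notice that a subtree with $h_j<h^{\ast}$ hangs off a spine node at height $h_j+1\le h^{\ast}$, and so lies inside the same structural height-$h^{\ast}$ vEB leaf-block that holds the bottom $h^{\ast}$ levels of that spine. Since this leaf-block fits in a block in both the full top-tree layout and the compressed bottom-tree layout, all small interior subtrees along a given spine are absorbed into blocks already charged to that spine, and contribute nothing extra. Adding the three contributions gives the claimed \varbigoh{w/B^{\miniexpin}+\log_B U} bound.
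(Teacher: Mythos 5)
Your proof is correct, but it takes a genuinely different route from the paper's. You decompose the query range in the standard way into two root-to-leaf spines plus \bigoh{\log U} canonical interior subtrees, which forces you to confront the "many small subtrees" problem: a naive $+1$ per canonical piece would give \bigoh{\log U} rather than \bigoh{\log_B U}, and you resolve this with the (correct) observation that any interior subtree of height below the vEB cutoff $h^{\ast}$ lives inside the same bottom recursive subtree of the layout as the spine node it hangs from, so its blocks are already paid for. The paper avoids this issue entirely by covering the column range with just \emph{two} nodes $x_l$ and $x_r$ at height roughly $\log w$ whose rectangles contain all the relevant leaves; the induced subtree below each is a single binary tree of height at most $1+\log w$ (binary for the same fixed-$v$ reason you give), so one application of Lemma~\ref{lem:binarysubgraph} to each yields \varbigoh{1+\frac{w}{B^{\miniexpin}}} blocks, and the ancestor paths above $x_l$ and $x_r$ cost \bigoh{\log_B U} by the path-storage fact. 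The trade-off: the paper's coarse two-node covering makes the "hard part" of your argument vanish, at the price of needing the existence of two disjoint aligned nodes of width at most $2w$ covering the range; your finer decomposition is the more standard range-tree argument and makes explicit exactly where the $\log_B U$ and $w/B^{\miniexpin}$ terms come from, but only goes through because of the absorption step, which is the one piece a reader must check carefully.
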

\begin{proof}

Let $L$ be the set of all points corresponding to the memory locations $A_{v}[i\ldots i+w-1]$.
There are two disjoint rectangles corresponding to nodes in the ST-tree with width at most $2w$ such that the two rectangles contain and partition all elements of $L$. Let $x_l$ and $x_r$ be the nodes corresponding to these two rectangles. Those leaves in subtrees of $x_l$ and $x_r$ corresponding to points in $L$ are the leaves of two binary trees with roots $x_l$ and $x_r$.
 Since the height of $x_l$ and $x_r$ are at most $1+\log w$, by Lemma~\ref{lem:binarysubgraph} any binary tree rooted at them will occupy at most $\varbigoh{1+\frac{w}{B^{\miniexpin}}}$ blocks.
The nodes on the path between $x_{l}$ and $x_{r}$ are stored in \bigoh{\log_{B} U} blocks.
\end{proof}

\begin{thm}
Suppose a sequence $X$ of \kwpread{} operations executed on the same version takes time $T(M,B)$ to execute 
ephemerally in the cache-oblivious model on a machine with memory $M$ and block size $B$ and an initially empty memory. Then in our structure, the runtime is 
$$\varbigoh{ T\left(\frac{\frac 13 M}{B^{1-\miniexp}(1+\log_BU)}, B^{\miniexp} \right)(1+\log_BU) }$$ 
\end{thm}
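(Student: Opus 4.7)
The plan is to simulate the ephemeral run of $X$ on our persistent structure, paying a fixed blowup per ephemeral cache miss dictated by Lemma~\ref{lem:blowup}. Since every query in $X$ is on the same version $v$, every \kwpread{} answer sits at a leaf of a single ST-tree---the one whose rectangle contains row $v$---reached by walking the pointer $p$ up and then back down. Hence, servicing the sequence persistently requires only visiting leaves (and ancestors on the walking paths) of this one tree.

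I would then invoke the ideal-cache assumption to replace the actual cost by the cost of an explicit eviction policy. The policy I analyze shadows a hypothetical ephemeral run of $X$ with block size $B_{\mathrm{eph}} = B^{\miniexp}$ and memory $M_{\mathrm{eph}}$ to be determined: whenever this ephemeral simulation keeps a block of $B^{\miniexp}$ consecutive cells of $A_v$ cached, the persistent policy keeps the \bigoh{1+\log_B U} persistent blocks that Lemma~\ref{lem:blowup} (applied with $w=B^{\miniexp}$) guarantees contain the corresponding leaves together with all their ancestors. Thus each ephemeral cache miss incurs \bigoh{1+\log_B U} persistent misses, and holding $k$ ephemeral blocks consumes \bigoh{kB(1+\log_B U)} cells of persistent cache. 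With the $M/3$ allocated to \kwpread{}, this gives $k=\Theta(M/(B(1+\log_B U)))$, hence $M_{\mathrm{eph}} = k\cdot B^{\miniexp} = \Theta(M/(B^{1-\miniexp}(1+\log_B U)))$, matching the first argument of $T$ in the statement. The ephemeral algorithm incurs at most $T(M_{\mathrm{eph}}, B^{\miniexp})$ block misses, so the persistent total is \bigoh{T(M_{\mathrm{eph}}, B^{\miniexp})(1+\log_B U)}, as required.

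The main technical point will be justifying the reduction: one has to verify that once the \bigoh{1+\log_B U} persistent blocks corresponding to an ephemeral block are resident, any pointer walk of $p$ that lands at a leaf in that block is free, since the walk follows a root-to-leaf path whose non-leaf nodes lie among the ancestors counted by Lemma~\ref{lem:blowup}; and that walks crossing between two resident ephemeral blocks pass only through nodes cached for one endpoint or the other. The overlap of ancestor chains for adjacent ephemeral blocks can only save misses and does not inflate the cache footprint, so an upper bound on the count suffices. With these observations the calculation above yields the stated bound.
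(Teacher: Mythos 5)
Your proposal is correct and follows essentially the same route as the paper's proof: simulate an ephemeral run with block size $B^{\miniexp}$ and memory $\Theta(M/(B^{1-\miniexp}(1+\log_B U)))$, use Lemma~\ref{lem:blowup} with $w=B^{\miniexp}$ to charge $\bigoh{1+\log_B U}$ persistent blocks per resident ephemeral block, check that the total footprint fits in the $M/3$ cache share so pointer walks within resident regions are free, and conclude a $\bigoh{1+\log_B U}$ slowdown per ephemeral miss. The only difference is presentational: you derive the ephemeral memory size from the cache budget, while the paper posits it and verifies the footprint equals $M/(3B)$ blocks.
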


\begin{proof}
Consider executing $X$ ephemerally with memory $\frac{M}{3(1+\log_BU)B^{1-\miniexp}}$ and block size $B^{\miniexp} $. It keeps in memory 
\inout{
\frac{M}{3(1+\log_BU)B^{1-\miniexp}}\cdot \frac{1}{B^{\miniexp}}  
=
\frac{M}{3(1+\log_BU)B}
}
ephemeral blocks.
Each ephemeral block is stored in $\varbigoh{B^{\miniexp}/B^{\miniexp}}=\bigoh{1}$ blocks; including the ancestors of the block this becomes $\bigoh{1+\log_B U}$.
Now,  by Lemma~\ref{lem:blowup} the memory blocks needed to keep the leaves representing $\frac{M}{3(1+\log_BU)B}$ consecutive memory locations of length $B$ in the persistent structure and their ancestors is 
\inout{\frac{M}{3(1+\log_BU)B}\cdot 
(1+\log_BU) = \frac{M}{3B}.
}
Thus they can all be stored in the third of memory allocated to \kwpread{} operations ($\frac{M}{3B}$ blocks), and thus moving $p$ to any location in the memory in the ephemeral structure will involve walking it entirely though locations in memory in the persistent structure, and thus will have no cost.
Now suppose the ephemeral structure moves one block into memory at unit cost. This could require moving $\varbigoh{B^{\miniexp}/B^{\miniexp}+\log_BU}= \bigoh{1+\log_BU}$ blocks to move the associated leaves and their ancestors in the persistent structure into memory by Lemma~\ref{lem:blowup}. Thus, this is the slowdown factor the persistent structure will incur relative to the ephemeral.
\end{proof}

\subsection{Analysis of Write}

We charge the cost of rebuilding to \kwwrite{} operations. This only increases their cost by a multiplicative constant factor since we double $\asize$ after \bigom{\asize} \kwwrite{}s. 

Every $U$ \kwwrite{}s, we make the existing top tree a bottom tree and create a new top tree. Three steps involved in this process are 
closing the rectangles of the top tree, 
compression of the top tree to a bottom tree, 
and initializing a new top tree and populating the leafs.

The following lemmas bound the cost of these steps. 

\begin{lem}
\label{lem:copyclose}
Copying performed to populate the leafs of a newly created subtree of height $h$, or closing the open rectangles in the subtree of a node at height $h$ costs \varbigoh{\frac{2^{h}}{B^{\miniexpin}}} block transfers. 
\end{lem}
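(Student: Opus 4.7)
The plan is to reduce both cost bounds to a single application of Lemma~\ref{lem:binarysubgraph}, by showing that in each case the nodes being touched form an induced binary tree of height at most $h$ inside the ST-tree storage layout.

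For the newly populated subtree, this is immediate from the construction: adding a third child builds a fresh \emph{complete} binary tree of height $h$, in which every internal node has exactly two children and no upper (third) child has yet been introduced. For the closing procedure, the description in the paragraph on ``Closing rectangles'' instructs us to descend only into the (at most) two children of each visited node whose rectangles are open; hence the visited portion of the subtree has branching factor at most two, is rooted at the given height-$h$ node, and has height at most $h$. In both cases the visited set is an induced binary subtree of the ST-tree.

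Applying Lemma~\ref{lem:binarysubgraph}, an induced binary tree of height $\leq h$ is stored in $\varbigoh{1+\frac{2^h}{B^{\miniexpin}}}$ memory blocks. Each of these blocks is touched only a constant number of times: for populating, we write one value per leaf; for closing, we overwrite one time coordinate per open node. The block transfer count is therefore $\varbigoh{1+\frac{2^h}{B^{\miniexpin}}}$, which matches the claimed bound (the additive constant being harmless, since at least one transfer is always required in any nontrivial call). The populate step additionally reads the $2^h$ consecutive source entries of $C$, which costs $\varbigoh{1+\frac{2^h}{B}}$ transfers and is subsumed by the tree-writing cost because $\miniexpin<1$.

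The main subtlety I anticipate is justifying rigorously that, even though the ambient ST-tree is 3-ary, the set of open rectangles one encounters during a close operation is always binary. This should follow from the invariant that open rectangles are never full together with the write procedure's structure: an upper (third) child is introduced only when its parent becomes full, so at every level of the traversal exactly one of the up-to-three children is the closed full sibling while the remaining open descendants branch at most two ways. Once this structural claim is verified, Lemma~\ref{lem:binarysubgraph} applies verbatim and the bound follows.
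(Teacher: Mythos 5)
Your proof is correct and follows essentially the same route as the paper's: both cases reduce to Lemma~\ref{lem:binarysubgraph} via the observation that at most two of a node's (up to three) children can be open, so the traversed portion is a binary tree of height at most $h$. The ``subtlety'' you flag is settled directly by the stated ST-tree invariants rather than by tracing the write procedure: a node with three children must have a full child, and open rectangles are required to be non-full, so that child is closed.
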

\begin{proof}
We only close rectangles that are open and we copy nodes from the array $C$ to leafs of the new subtree which only has open rectangles. 
If a node is closed all of its children are closed and do not need to be traversed. Given any node, since at most 2 out of its potentially 3 children can be open, the tree we traverse to close rectangles or copy nodes is a binary tree. By Lemma~\ref{lem:binarysubgraph}, each of these tasks costs \varbigoh{\frac{2^{h}}{B^{\miniexpin}}} block transfers. 
\end{proof}

\begin{lem}
\label{lem:compress}
Compression of the top tree into a bottom tree costs \varbigoh{\frac{U\log U}{B^{\miniexpin}}} block transfers.
\end{lem}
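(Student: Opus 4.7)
}
The compression procedure has two parts: reading the real nodes out of the top tree (which is laid out as a complete $3$-ary tree of height $\log U$ in the biased van Emde Boas layout) and writing them to a new contiguous range of memory that forms the compressed bottom tree. The output has $O(U \log U)$ cells by Lemma~\ref{lem:nontopspace}, so the write phase trivially costs $O(U \log U / B)$ block transfers, which is within the claimed bound since $B \geq B^{(1-\epsilon)/\log 3}$. The work is therefore in bounding the read phase.

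For the read phase, the plan is to reduce to Lemma~\ref{lem:binarysubgraph} by decomposing the real top tree into induced binary subtrees. Concretely, at every real node that has three children, cut the edge going to the \emph{upper} child. What remains is a forest of induced subtrees in which every internal node has exactly two children (left and right), so each component is an induced binary tree in the sense of Lemma~\ref{lem:binarysubgraph}. The roots of these components are either the overall root of the top tree or nodes that were previously an upper child of some ternary node.

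The next step is to count components and to control a height-weighted sum. A node at height $h+1$ with three children must have two full children at height $h$, so by Lemma~\ref{lem:rectlb} its subtree intersects at least $2^{h+1}$ of the $U$ points contained in the top tree's square. Hence the number of ternary nodes at height $h+1$ is at most $U/2^{h+1}$, and summing over heights gives $O(U)$ ternary nodes overall; thus the decomposition has $O(U)$ binary components. An upper child rooted at height $h$ contributes $2^h$ to $\sum_i 2^{h_i}$, and the same counting argument gives
\[
\sum_i 2^{h_i} \;=\; 2^{\log U} + \sum_{h=0}^{\log U - 1} (U/2^{h+1})\cdot 2^h \;=\; O(U \log U).
\]
Applying Lemma~\ref{lem:binarysubgraph} to each binary component and summing yields a read cost of $O\!\left(U + U\log U / B^{(1-\epsilon)/\log 3}\right)$, which combined with the $O(U\log U / B)$ write cost gives the claimed bound.

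The part that will need the most care is absorbing the additive $O(U)$ coming from the ``$+1$'' in Lemma~\ref{lem:binarysubgraph}, one per binary component. The clean way to handle this is to observe that many small components sit in the same block of the vEB layout because they are attached to common ancestors in the underlying complete $3$-ary tree; grouping components by their containing block at the base level of the vEB recursion, so that the constant charges accumulate only once per touched base block, converts the $O(U)$ term into another $O(U \log U / B^{(1-\epsilon)/\log 3})$ contribution. This bookkeeping, rather than the decomposition itself, is the main obstacle.
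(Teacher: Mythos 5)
Your proposal is correct and, despite the different packaging, is essentially the paper's own argument: cutting the upper-child edges decomposes the real top tree into exactly the initial complete binary tree plus the complete binary trees that were added as third children, the paper counts these via Lemma~\ref{lem:rebuild} while you count them via Lemma~\ref{lem:rectlb} (equally valid, and the same style as the proof of Lemma~\ref{lem:nontopspace}), and both routes end by summing Lemma~\ref{lem:binarysubgraph} over $O(U/2^h)$ components rooted at each height $h$. One small correction: a ternary node at height $h+1$ is only guaranteed \emph{one} full child at height $h$ (the invariant is that one of the three children is full; the sibling of the newly full node need not be), so the count is $U/2^h$ ternary nodes per level rather than $U/2^{h+1}$ --- harmless asymptotically. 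More substantively, the additive $O(U)$ from the ``$+1$'' terms that you flag is a genuine subtlety that the paper's proof silently drops; your grouping fix is the right one, and it can be completed by noting that every touched base-level vEB piece has a real node at its top, that the number of real nodes at height $h$ is $O(U\log U/2^h)$, and hence that the touched base pieces and the real nodes above them together account for only $O(U\log U/B^{(1-\epsilon)/\log 3})$ blocks.
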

\begin{proof} 
The top tree is initially a complete binary tree and is stored in \varbigoh{\frac{U}{B^{\miniexpin}}} blocks by Lemma~\ref{lem:binarysubgraph}. We have to account for the additional blocks used to store the elements inserted and nodes created in the top tree after it was created.

By Lemma~\ref{lem:rebuild}, \bigoh{\frac{U}{2^{k}}} nodes are added to the top tree at height $k$. 
When a node at height $k$ is added to the top tree, it is the root of a complete binary search tree
and its subtree is stored in \varbigoh{2^{k}/B^{\miniexpin}} blocks by Lemma~\ref{lem:binarysubgraph}. 
This implies that added nodes take an additional 
\inout{
\sum_{j=1}^{\log U} \frac{U}{2^{j}}\cdot \frac{2^{j}}{B^{\miniexpin}} = \varbigoh{\frac{U\log U}{B^{\miniexpin}}}
} 
blocks. 
\end{proof}

\begin{thm}
\label{thm:k}
Suppose a sequence of $k$ \kwwrite{} operations takes time $T(M,B)$ to execute 
ephemerally in the cache-oblivious model on a machine with memory $M$ and block size $B$ and an initially empty memory. Then in our structure, the runtime is 
\[
\varbigoh{ T\left(\frac{\frac13 M}{B^{1-\miniexp}}, B^{\miniexp} \right)  + k\frac{\log U}{B^{\miniexp }}}
\]
\end{thm}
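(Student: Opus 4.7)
The plan is to split the total cost of the $k$ \kwwrite{} operations into two components and bound each separately: (a) the ``per-operation'' cost of applying each write in place, namely updating $C[i]$, walking the leaf-pointer $p$, and flipping the leaf from empty to full; and (b) the amortized ``maintenance'' cost of all the structural changes (closing rectangles, adding a third child, converting the top tree into a bottom tree every $U$ writes, and the doubling rebuild when $\asize$ grows). Component (a) will account for the $T\bigl(\tfrac{M/3}{B^{1-\miniexp}},B^{\miniexp}\bigr)$ term, while component (b) will account for the additive $k\log U/B^{\miniexp}$ term.

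For (a), I would reuse the analysis already applied to \kwpread: by Lemma~\ref{lem:blowup}, any ephemeral block of $B^{\miniexp}$ consecutive memory locations together with its ancestors in the current top tree sits in $\varbigoh{1+\log_B U}$ blocks of our structure. With $M/3$ of physical memory dedicated to \kwwrite, running the ephemeral sequence with memory $\tfrac{M/3}{B^{1-\miniexp}(1+\log_B U)}$ and block size $B^{\miniexp}$ fits the induced persistent working set inside $M/(3B)$ blocks, so pointer motion through $p$ is free and each ephemeral fault costs only $\bigoh{1+\log_B U}$ persistent block transfers. This gives the first term; the $\log_B U$ factor that would appear inside $T$ can be absorbed into the second additive term (by separating the pure pointer-movement cost from the faulting cost, exactly as in the \kwpread{} argument). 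The act of writing $C[i]=x$ is serviced from the same memory budget and is free once the block containing $C[i]$ is resident.

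For (b), the amortized argument is essentially already assembled in the earlier lemmas. By Lemma~\ref{lem:rebuild}, each insertion creates at height $h$ on average $\bigoh{1/2^h}$ ``new third child'' events; by Lemma~\ref{lem:copyclose}, such an event costs $\varbigoh{2^{h}/B^{\miniexpin}}$ block transfers both for populating the new subtree from $C$ and for closing the open rectangles of its sibling. Summing over $h=0,1,\ldots,\log U$ gives
\[
\sum_{h=0}^{\log U} \frac{1}{2^{h}} \cdot \frac{2^{h}}{B^{\miniexpin}} = \varbigoh{\frac{\log U}{B^{\miniexpin}}}
\]
amortized block transfers per write. The once-per-$U$-operations conversion of the top tree into a bottom tree costs $\varbigoh{U\log U / B^{\miniexpin}}$ by Lemma~\ref{lem:compress} and so amortizes to the same $\varbigoh{\log U/B^{\miniexpin}}$ per write; initialization of the fresh top tree as a complete binary tree of height $\log U$ costs $\varbigoh{U/B^{\miniexpin}}$ by Lemma~\ref{lem:binarysubgraph} and is absorbed. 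The doubling rebuilds triggered when $i>\asize$ add only a constant multiplicative overhead, as is standard. Multiplying by $k$ yields the $k\log U/B^{\miniexp}$ contribution.

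The main obstacle, and the reason the two terms appear additively rather than multiplicatively as in the \kwpread{} theorem, is cleanly separating the amortized reconfiguration cost from the ``steady-state'' cost: any in-place write that does \emph{not} trigger a new third child must be charged only to (a), and this requires observing that the leaf-pointer walk and the single leaf-flip traverse only the induced binary subtree of already-existing ST-tree nodes (to which Lemma~\ref{lem:blowup} applies), while the $2^h$-sized bursts of work produced by ancestor reconfigurations are paid for entirely by the potential released at those ancestors. Once this decomposition is in place, the two bounds above combine into the claimed expression.
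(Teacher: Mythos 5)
Your component (b) --- the amortized maintenance cost --- is essentially the paper's own argument: Lemma~\ref{lem:rebuild} for how often a node at height $h$ gains a third child, Lemma~\ref{lem:copyclose} for the $\varbigoh{2^h/B^{\miniexp}}$ cost of populating and closing, and Lemma~\ref{lem:compress} for the once-per-$U$ conversion of the top tree, all summing to $\varbigoh{\log U/B^{\miniexp}}$ amortized per \kwwrite{}. That half is fine.

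Component (a) has a genuine gap. You transplant the \kwpread{} analysis wholesale, so Lemma~\ref{lem:blowup} forces you to carry the ancestors of the accessed leaves: the memory inside $T$ shrinks by an extra $(1+\log_B U)$ factor and each ephemeral fault costs $\bigoh{1+\log_B U}$ persistent transfers. You then claim this multiplicative $\log_B U$ ``can be absorbed into the second additive term.'' It cannot: the only generic bound is $T\le k$, so the excess can be as large as $k\log_B U = k\log U/\log B$, which strictly dominates the additive budget $k\log U/B^{\miniexp}$ because $B^{\miniexp}$ is polynomial in $B$ while $\log B$ is not; nor does ``separating pointer movement from faulting'' remove it, since in the \kwpread{} argument the $\log_B U$ is attached precisely to the faults. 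The idea you are missing is the one the paper uses: the current-memory array $C[i]$ stores a \emph{direct pointer} to the leaf containing the highest point in column $i$, so a \kwwrite{} never traverses ancestors at all. With the ancestors gone, $w$ consecutive leaves occupy only $\varbigoh{1+w/B^{\miniexp}}$ blocks, so each ephemeral block maps to $\bigoh{1}$ persistent blocks, the memory reduction is only $B^{1-\miniexp}$, and the per-fault slowdown is $\bigoh{1}$ --- which is exactly why the first term of the theorem is $T\left(\frac{M/3}{B^{1-\miniexp}},B^{\miniexp}\right)$ with no $\log_B U$ anywhere. As written, your argument proves only the weaker bound with a multiplicative $(1+\log_B U)$ on $T$.
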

\begin{proof}
To find the item, the analysis is similar to that of \kwpread{}, except we can use the pointers in $C$ to directly go to the item. This removes the $\log_B U$ terms. 

We need to bound the cost of operations performed to maintain the ST-tree. Recall that after each insertion, we go up the tree until we find an ancestor $p$ such that its parent is not full. 
Then, we create a new sibling node $q$ whose leafs are populated with the copies of values stored in the corresponding top leaf nodes of $p$, and close the rectangles in the subtree of $p$. 
Letting the height of $p$'s parent be $h$, we refer to this sequence of operations as the expansion of a node at height $h$.

The creation of the new top tree occurs once every $U$ \kwwrite{}s, and thus the amortized cost per \kwwrite{} is \bigoh{\log U/B^{\miniexpin}} by Lemma~\ref{lem:compress}. 
Since the expansion of a node at height $h$ happens because there have been at least $2^{h}$ insertions since the node was created, by Lemma~\ref{lem:copyclose}, the amortized cost of a \kwwrite{} operation is at most
\inout{
\frac{\log U}{B^{\miniexp}}+ \sum_{j=1}^{\log U} \frac{1}{2^j} \cdot\frac{2^{j}}{B^{\miniexp}}
= \varbigoh{\frac{\log U}{B^{\miniexp }}}.
}
\end{proof}

We also consider the case when all \kwwrite{} operations are executed in unique memory cells. 
This is a reasonable assumption when we have update operations involving numerous  \kwwrite{} operations where for each memory cell we only need to remember the last \kwwrite{} executed during that update operation.

\begin{thm}
\label{thm:nok}
Suppose a sequence of \kwwrite{} operations performed on unique cells of $A$ takes time $T(M,B)$ to execute 
ephemerally in the cache-oblivious model on a machine with memory $M$ and block size $B$ and an initially empty memory. Then in our structure, the runtime is 
\[
\varbigoh{
T\left(
\frac{\frac13 M}{B^{1-\miniexp}\log B}, 
\frac{B^{\miniexp}}{\log B} 
\right)
\log_B U 
}
\]
\end{thm}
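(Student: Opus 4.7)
The plan is to derive this bound as a corollary of Theorem~\ref{thm:k} by exploiting the unique-cells hypothesis to reabsorb the maintenance term $k\log U/B^{\miniexp}$ into the ephemeral cost at a slightly smaller block size. Let $M_1 = \tfrac{1}{3}M/B^{1-\miniexp}$, $B_1 = B^{\miniexp}$, and let $M_2 = M_1/\log B$, $B_2 = B_1/\log B$, so that $T' := T(M_2,B_2)$ is precisely the quantity appearing in the statement. Applying Theorem~\ref{thm:k} directly yields a persistent runtime of $\bigoh{T(M_1,B_1) + k\log U/B_1}$, so the task reduces to showing both summands are $\bigoh{T'\log_B U}$.

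The key new ingredient is the following observation, which is essentially the entire use of the unique-cells hypothesis. In the ideal-cache model started with empty memory, $k$ writes to pairwise distinct cells must touch at least $\lceil k/B_2 \rceil$ distinct blocks of size $B_2$, and every such block must be loaded at least once. Therefore $T(M_2,B_2) = \Omega(k/B_2)$, i.e.\ $k = \bigoh{T' \cdot B_2}$. Substituting and using $B_2 = B_1/\log B$,
\[
\frac{k\log U}{B_1} \;=\; \bigoh{T'\cdot \frac{B_2 \log U}{B_1}} \;=\; \bigoh{T'\cdot \frac{\log U}{\log B}} \;=\; \bigoh{T'\log_B U},
\]
so the maintenance term is dominated by $T'\log_B U$ as desired.

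For the first summand, I would invoke the standard monotonicity of cache-oblivious runtimes: since $M_1 \ge M_2$ and $B_1 \ge B_2$, we have $T(M_1,B_1) \le T(M_2,B_2) = T'$, which is in turn trivially $\bigoh{T'\log_B U}$. Adding the two bounds produced by Theorem~\ref{thm:k} therefore gives the claimed bound.

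The main conceptual step, and the only place where the unique-cells hypothesis enters, is the lower bound $T(M_2,B_2) = \Omega(k/B_2)$; everything else is bookkeeping on the parameters of Theorem~\ref{thm:k}. The most delicate point to state carefully is the monotonicity of $T$ in $B$, which is standard for the runtimes considered in this paper (where the ephemeral cost is computed from the LRU-like rule described in the preliminaries) but deserves a brief acknowledgment so the reader sees why shrinking both $M$ and $B$ by a factor $\log B$ can only increase $T$.
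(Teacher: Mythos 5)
Your proposal is correct and follows essentially the same route as the paper: bound $k$ by $T(M_2,B_2)\cdot B_2$ using the unique-cells hypothesis, substitute into Theorem~\ref{thm:k}, and absorb the first summand via monotonicity of $T$. One minor point: the monotonicity you need is the paper's version ($B'\le B''$ with $M'/B'=M''/B''$, i.e.\ the same number of cached blocks), not merely $M_1\ge M_2$ and $B_1\ge B_2$; your parameters do satisfy the former, so the argument goes through.
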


\begin{proof}
Observe that given $M',M'',B',B''$ if $B'\leq B''$ and $M'/B' = M''/B''$, then $T(M',B') \geq T(M'',B'')$. 
Thus, since no two \kwwrite{} operations overlap, we have in the worst case that
\inout{
k
\leq  T\left(
\frac{\frac13 M}{B^{1-\miniexp}\log B}, 
\frac{B^{\miniexp}}{\log B} 
\right)
\cdot  
\frac{B^{\miniexp}}{\log B}.
} 
The result then follows by substituting $k$ in Theorem~\ref{thm:k}.
\end{proof}

\bibliographystyle{plain}
\bibliography{dblp}

\end{document}